\def\eps{\varepsilon}
\def\fii{\varphi}
\def\d{ {\rm d} }
\def\scri{{\cal I}}
\newcommand{\qed}{\hfill $\Box$ \medskip}
\newtheorem{theo}{Theorem}[section]
\newtheorem{lemma}[theo]{Lemma}
\newtheorem{col}[theo]{Corollary}
\newcounter{mnotecount}[section]
\renewcommand{\themnotecount}{\thesection.\arabic{mnotecount}}
\newcommand{\mnote}[1]
{\protect{\stepcounter{mnotecount}}$^{\mbox{\footnotesize
$
\bullet$\themnotecount}}$ \marginpar{
\raggedright\tiny\em
$\!\!\!\!\!\!\,\bullet$\themnotecount: #1} }
\newcommand{\NUMPARTS}[1]{\numparts \def\theequation{#1\arabic{eqnval}{\it \alph{equation}}}}
\newcommand{\ENDNUMPARTS}[1]{\endnumparts \def\theequation{#1\arabic{equation}}}
\begin{document}

\title[No periodic asymptotically flat solutions of the Einstein-Maxwell equations.]{On asymptotically flat solutions \\of Einstein's equations periodic in time\\ I. Vacuum and electrovacuum solutions}
\author{J Bi\v{c}\'ak$^{1,3}$, M Scholtz$^{1,3}$, P Tod$^{2,1}$}
~\\
\address{$^1$\,Institute of Theoretical Physics, Faculty of Mathematics and Physics, \\ Charles University, V Hole\v{s}ovi\v{c}k\'ach 2, 180 00 Prague 8, Czech Republic}~\\
\address{$^2$\, Mathematical Institute, Oxford OX1 3LB, UK}~\\
\address{$^3$\,Max Planck Institute for Gravitational Physics, Albert Einstein Institute, \\
Am M\"uhlenberg 1, 14476 Golm, Germany}
\eads{\mailto{bicak@mbox.troja.mff.cuni.cz}, \mailto{scholtzzz@gmail.com}, \mailto{paul.tod@sjc.ox.ac.uk}}
\begin{abstract}
By an argument similar to that of Gibbons and Stewart \cite{GS}, but
in a different coordinate system and less restrictive gauge, we show
that any weakly-asymptotically-simple, analytic vacuum or
electrovacuum solutions of the Einstein equations which are periodic
in time are necessarily stationary.
\end{abstract}

\section{Introduction}
The inspiral and coalescence of
binary black holes or neutron stars appears to be the most promising
source for the detectors of gravitational waves, so that there has
been much effort going into the development of numerical codes and
analytic approximation methods to find the corresponding solutions
of Einstein's equations. One of the recent approaches assumes the
existence of a helical Killing vector $k$ (see e.g.
\cite{Workshop}). The field is assumed stationary in a rotating
frame where $k$ generates time translations but $k$ becomes null at
the light cylinder and is spacelike outside. $k$ has the form $k =\partial_t + \omega \partial_\phi$ where $\partial_t$ is timelike
and $\partial_\phi$ is spacelike with circular orbits with parameter
length $2\pi$ (except where $\partial_\phi=0$); $\omega=$constant.
The space-time is not stationary but it is still periodic where $k$
is spacelike. Requiring the helical symmetry for a binary system
implies equal amounts of outgoing and incoming radiation so that the
space-time, containing energy radiated all times is not expected to
be asymptotically flat. A corresponding solution in Maxwell's theory
for two opposite point charges moving on circular orbits was
considered a long time ago by Schild \cite{Schild}. The properties
of the field were analyzed recently in the Newman-Penrose formalism
in \cite{JiBiSchm}. The rather complicated periodicity properties of
the solution became apparent as well as its asymptotic behaviour: at
$\scri^-$ the advanced fields exhibit the standard Bondi-type
expansion and peeling, whereas the retarded fields do decay with
$r\rightarrow\infty$ but in an oscillatory manner like $(\sin r)/r$.
Hence for the retarded plus advanced solution no radiation field is
asymptotically defined.
Naturally, one would like to go beyond the linearized theory. There
are special exact, time-dependent, solutions known, for example,
Szekeres's dust solution, which has in general no Killing vector,
which can be matched to an exterior Schwarzschild metric \cite{Bo}. One can construct oscillating
spherical shells of dust particles moving with the same angular
momentum, but in every tangential direction, or oscillating Einstein
clusters which are matched to the Schwarzschild space-time outside
\cite{Gair}. Can there be periodic solutions representing ``bound
states" of gravitational or electromagnetic waves so that the
radiation field at infinity vanishes and the Bondi mass remains
constant?

There have been various attempts to prove that, while solutions of
the vacuum Einstein equations can be genuinely periodic in a
suitable time-coordinate (so \emph{not} time-independent), these
solutions cannot be asymptotically flat. These started with
\cite{pap1} and \cite{pap2}, with a summary in English in
\cite{pap3}, and \cite{AT} and more recently was considered in
\cite{GS}.
The method in \cite{pap1} considers vacuum metrics which are
everywhere nonsingular, weak and asymptoti\-cally-flat and which can
be expanded in a series in some parameter, with the flat metric as
the first term in the series. Each term in the series is assumed to
be periodic in a fixed Minkowski time-coordinate and to satisfy the
de Donder gauge condition. The second and third terms, call them
$v_{ab}$ and $w_{ab}$ respectively, are expanded as Fourier series
in the background time-coordinate and the Einstein equations then
imply that $v_{ab}$ satisfies the source-free wave equation, and
$w_{ab}$ satisfies a wave equation whose source is a quadratic
expression in $v_{ab}$. Assuming that the solution for $v_{ab}$ is
everywhere regular, the author shows that there cannot be an
asymptotically-flat solution for $w_{ab}$ unless $v_{ab}$ vanishes.
Therefore the space-time is flat. In \cite{pap2}, a similar
calculation when $v_{ab}$ is regular only outside a certain radius
leads to the conclusion that $v_{ab}$ must be time-independent in
order to have asymptotically-flat $w_{ab}$, and the space-time is
stationary.
In \cite{AT} it was observed by integrating the Einstein
pseudotensor and matter energy-momentum tensor over a 4-dimensional
volume that ``the mean value of power radiated by a periodic,
asymptotically Minkowskian gravitational field is equal to zero''.
The question of existence of periodic fields was left open.
In \cite{GS} the authors used the spin-coefficient formalism (see
e.g. \cite{NP}, \cite{JS}) to study the system of conformal Einstein
equations of Friedrich \cite{HF}. A coordinate system is based on
two families of null hypersurfaces, incoming from past null infinity
$\scri^-$ and labelled by constant $v$ and outgoing near $\scri^-$
and labeled by constant $u$. The authors define periodicity as meaning periodic in $v$ in these coordinates and are then able to prove that, at $\scri^-$, the
$u$-derivatives of all orders of all components of the metric are
independent of $v$. They conclude that if the metric is analytic in
these coordinates, then it necessarily has a Killing vector, which
in these coordinates is $\partial_v$, at least in a neighbourhood of
$\scri^-$. Thus any analytic metric, periodic in their sense, has
such a Killing vector. While certainly correct, there is a problem
with this conclusion in that, by construction, the Killing vector is
null wherever it is defined, and reduces at $\scri^-$ to a constant
translation along the generators. These are strong conditions and in
fact no Killing vector in flat space has these properties (any null
Killing vector is necessarily a null translation, and a null
translation is zero along one generator of $\scri$)\footnote{For
example the null translation $\partial_t+\partial_z$ becomes
$2\cos^2(\theta/2)\,\partial_v$ on $\scri^-$, which vanishes at
$\theta=\pi$.}. Thus flat space is not periodic according to the
definition of \cite{GS} and nor is any of the familiar stationary,
asymptotically flat solutions, for example the Schwarzschild
solution, so that this definition of periodicity is `too strong'.

For convenience, we follow \cite{GS} in working at $\scri^-$ rather
than $\scri^+$, though this is trivial to switch, but we shall make
a weaker definition of `periodic in time' which will permit metrics
stationary near $\scri^-$ and indeed will allow only these for
analytic, asymptotically-flat vacuum or electrovac metrics. We
follow the method of \cite{GS} for both the vacuum and electrovac
field equations, deferring other cases to a second article, but in a
different coordinate and tetrad system. Our coordinate and tetrad system is similar to
the one used at $\scri$ in \cite{NP}, and to prove the existence of
a symmetry at the event horizon in \cite{he} and at a compact Cauchy
horizon in \cite{IM}. We also differ from \cite{GS} in the choice of conformal gauge. In
\cite{GS} the unphysical Ricci scalar is set to zero by a choice of
conformal factor obtained by solving a wave equation. However, the
solution of the characteristic IVP for this wave equation as posed
in \cite{GS} will not in general be periodic, so that the rescaled,
unphysical metric would not in general share the periodicity of the
physical metric - in fact, in the particular case of the
Reissner-Nordstr\"om solution this gauge choice is compatible with
periodicity only for zero mass, as we show in Appendix C.
Thus we assume that there is at least one conformal factor which is
periodic and then modify this choice in the course of the
calculation in order to simplify the spin coefficients. From this point on, our method is then essentially the
same as in \cite{GS}, though a little more complicated, and we
arrive at the same conclusion, but now with a Killing vector which
is time-like in the interior, at least near to $\scri^-$.
The condition of time-like periodicity which we impose is as
follows: a space-time is time-like periodic if there is a discrete
isometry taking any point of the physical space-time to a point in
its chronological future. To define time-like periodicity at
$\scri^-$ for an asymptotically-flat space-time, we require this
isometry to extend to an isometry of a neighbourhood of
${\cal{I}}^-$ which preserves the generators of ${\cal{I}}^-$.  In
particular, we require the existence of at least one $\Omega$ which
conformally compactifies the space-time and preserves the
periodicity. The isometry has to be a supertranslation \cite{JS},
\begin{eqnarray}
\label{st}v\rightarrow v+a(\theta,\phi),
\end{eqnarray}
\noindent on ${\cal{I}}^-$, in terms of the usual coordinates
$(v,\theta,\phi)$ on $\scri^-$ and we shall assume that $a\neq 0$.
(We could imagine allowing $a$ to vanish on some generators of
$\scri^-$, since as noted above periodicity along a null translation
in flat space would appear like this at $\scri^-$, but this would be
null-periodicity rather than time-like periodicity.) We could assume
further that $a$ is actually a positive constant but this turns out
not to be necessary, as we shall find that, for analytic
space-times, this assumption of periodicity necessarily leads to a
space-time metric with a Killing vector which, in coordinates to be
defined, is $\partial_v$ and is time-like near $\scri^-$. Our result
is
\begin{theo}\label{one}
A weakly-asymptotically simple, vacuum or electrovac, time-periodic space-time
which is analytic in a neighbourhood of $\scri^-$ in the coordinates
introduced below necessarily has a Killing vector which is time-like
in the interior and extends to a translation on $\scri^-$.
\end{theo}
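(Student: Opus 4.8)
The plan is to follow the Gibbons–Stewart strategy but in the new null coordinate and tetrad system adapted to $\scri^-$, and to exploit analyticity to upgrade an infinite hierarchy of $v$-independence statements into an actual Killing vector. I would set up a pair of null congruences near $\scri^-$, using an affine-type parameter along the outgoing generators and the retarded coordinate $v$ labelling the incoming null hypersurfaces, together with the angular coordinates $(\theta,\phi)$; the tetrad is chosen so that $\partial_v$ is, on $\scri^-$, tangent to the generators. The first block of work is to write out the conformal (electro)vacuum Einstein equations of Friedrich in this tetrad and to fix the conformal gauge: here I depart from \cite{GS} by insisting at the outset on a conformal factor $\Omega$ that is periodic under the supertranslation \eqref{st}, and only afterwards refining $\Omega$ to normalize the spin coefficients. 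The payoff of this gauge is that periodicity of the physical metric descends to periodicity of all unphysical quantities, so that every field appearing in the equations is genuinely periodic in $v$.

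Next I would run the characteristic initial-value problem from $\scri^-$ inward. The key observation, exactly as in \cite{GS}, is that the conformal field equations restricted to $\scri^-$ and differentiated repeatedly in the transverse (``$u$'') direction form a hierarchy: at each order the $v$-derivative of the transverse data is determined by lower-order quantities, and periodicity together with the structure of these transport equations forces each of these $u$-derivatives to be $v$-independent on $\scri^-$. Concretely, I expect the mechanism to be that periodicity kills the secular (linearly growing in $v$) part of the solution of each transport equation, and the only way a periodic function can have vanishing mean drift at every order is for the driving $v$-dependent terms to vanish; this propagates up the hierarchy by induction. The outcome is the statement that all $u$-derivatives of all tetrad and curvature components are independent of $v$ at $\scri^-$.

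The final step is the analytic continuation argument. Once every transverse derivative of the full collection of geometric data is $v$-independent on $\scri^-$, analyticity in a neighbourhood of $\scri^-$ lets me reconstruct the metric off $\scri^-$ by its Taylor (or characteristic) expansion in the transverse direction, and $v$-independence of every coefficient means $\partial_v$ is a symmetry of the reconstructed data. By the standard argument that a vector field annihilating all the conformal field-equation data generates an isometry, $\partial_v$ is a Killing vector in a neighbourhood of $\scri^-$. It remains to check its causal character: on $\scri^-$ it reduces to the supertranslation \eqref{st} with $a\neq 0$, hence to a genuine (time-like) translation rather than the null translation that plagued \cite{GS}, and by continuity it is time-like just inside $\scri^-$.

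I expect the main obstacle to be the middle step, namely making the induction on the transverse-derivative hierarchy fully rigorous: one must verify that at each order the relevant conformal field equation really does take the form ``$v$-derivative of new data $=$ expression in already-controlled, $v$-independent data,'' so that periodicity forces $v$-independence without circularity, and one must track carefully how the gauge refinement of $\Omega$ interacts with periodicity. A secondary technical point is confirming that the new tetrad gauge does not reintroduce the degeneracy of \cite{GS} whereby the Killing vector is forced to be null; this is precisely where the assumption $a\neq 0$ and the time-like (rather than null) notion of periodicity must be used.
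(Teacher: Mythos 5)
Your outline reproduces the paper's overall strategy (adapted coordinates and tetrad, a periodic conformal factor later refined to normalize spin coefficients, an induction on transverse derivatives closed by periodicity, then analyticity), but it has two genuine gaps.

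First, and most seriously, your induction has no base case for the radiative data. The mechanism you describe --- ``periodicity kills the secular part of each transport equation'' --- is indeed the workhorse of the paper's induction (it is how $\lambda$, $\phi_2$ and $\psi_4$ are handled), but it cannot get started by itself: to derive $D^2\lambda=0$ on $\scri^-$ one first needs $\Delta\Psi_0=0$ and $\Phi_{00}=\Phi_{01}=0$ there, i.e.\ one needs to know that the incoming radiation fields vanish, $\dot{\tilde{\lambda}}^0=0$ and $\tilde{\phi}_0^0=0$. Periodicity of the transport hierarchy alone does not give this; a priori a time-periodic space-time could be periodically irradiated through $\scri^-$. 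The paper closes this with a separate physical argument that your proposal omits entirely: the Bondi mass (\ref{Heqinur}) at $\scri^-$ is non-decreasing in $v$ (radiation can enter but not leave through $\scri^-$), its rate of change is the integral of the sum of squares $\dot{\tilde{\lambda}}^0\,\dot{\bar{\tilde{\lambda}}}^0+\tilde{\phi}_0^0\,\bar{\tilde{\phi}}_0^0$, and a non-decreasing periodic function must be constant; hence both terms vanish identically, which is what feeds $\Delta\Psi_0=0$ and $\phi_0={\cal O}(\Omega)$ into the Lemma and the induction. Without this step the hierarchy is circular precisely at the point you yourself flag as the ``main obstacle.''

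Second, your final step is wrong as stated. The Killing vector $K=\partial_v$ is \emph{null} on $\scri^-$ --- it is tangent to the null generators there, exactly as any supertranslation is, and the hypothesis $a\neq 0$ does not make it time-like on $\scri^-$. Consequently ``by continuity it is time-like just inside'' proves nothing: the norm vanishes on $\scri^-$, and continuity of a vanishing function says nothing about its sign in the interior. The paper instead computes $g(K,K)=2(H-\omega\bar{\omega})={\cal O}(r^2)$ and then, using the frame equation (\ref{FEq1}) and the gauge values of the spin coefficients, $\Delta^2 g(K,K)=2\Delta^2H=-2\Delta(\eps+\bar{\eps})=2$ at $\scri^-$, which in signature $-2$ shows $K$ is time-like just inside. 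This explicit second-derivative computation, not continuity and not the condition $a\neq 0$, is what distinguishes the present construction from the degenerate, everywhere-null Killing vector of \cite{GS}.
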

\medskip
Thus there are no \emph{non-trivial} time-periodic solutions satisfying these
conditions, in the sense that they would necessarily be actually
time-independent if time-periodic.
In a later article, we shall prove the corresponding result for the
Einstein equations coupled to either a massless scalar field with
the usual energy-momentum tensor, or a solution of the
conformally-invariant wave equation with the energy-momentum tensor
from p125 of \cite{PR} (sometimes called the `new improved energy
momentum tensor').

The method of proof requires the assumption of analyticity. It was
shown in \cite{HF2} that there are vacuum solutions analytic near
$\scri^-$. However, one would like either to drop the assumption of
analyticity, for example following the lead of \cite{FWR} or
\cite{gal} with a similar problem, or to prove that it follows from
the assumptions of periodicity and asymptotic-flatness. It remains
to be seen in what circumstances this can be done since, as noted
above, there are non-analytic solutions with matter in periodic
motion and matched to a (static) Schwarzschild exterior.

While this work is primarily motivated by an interest in the
possibility or impossibility of helical motions, it is worth noting
the connection with the question of the inheritance of symmetry.
Recall that, for a solution of Einstein's field equations with
matter, the matter is said to inherit the symmetry of the metric if
any isometry of the metric is necessarily a symmetry of the matter.
There are explicit solutions of the Einstein-Maxwell equations known
for which an isometry of the metric is \emph{not} a symmetry of the
Maxwell field \cite{MV} but these solutions are not
asymptotically-flat. In \cite{SKH} some other references may be
found for explicit solutions with Maxwell fields which do not share
the symmetry  of the metric. The same will be true for some
Robinson-Trautman solutions with null electromagnetic field which
may depend on time though the metric is static  (see \cite{SKH}, \S
28.2) These solutions will very likely have wire singularities
extending to infinity. From Theorem \ref{one} noninheritance cannot
happen with asymptotically-flat, analytic solutions:
%
\begin{col}\label{two}
In any weakly-asymptotically simple, stationary electrovac space-time
which is analytic in a neighbourhood of $\scri^-$ in the coordinates
introduced below, the Maxwell field is also stationary.
\end{col}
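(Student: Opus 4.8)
The plan is to argue by contradiction using Theorem \ref{one}. Let $\xi$ be the timelike Killing vector generating the assumed stationarity, so that $\mathcal{L}_\xi g = 0$, and suppose the Maxwell field does \emph{not} inherit this symmetry, i.e. $\mathcal{L}_\xi F \neq 0$. Near $\scri^-$ the field of a stationary, weakly-asymptotically-simple electrovac solution is Coulombic, hence non-null, so the two invariants $F_{ab}F^{ab}$ and $F_{ab}{}^{*}F^{ab}$ do not both vanish there. Since $\xi$ is Killing, Einstein's equations give $\mathcal{L}_\xi T_{ab} = 0$ for the Maxwell stress tensor; because $T_{ab}$ is invariant under duality rotations of $F$, the classical result on non-inheriting Einstein--Maxwell fields (Michalski and Wainwright) then forces $\mathcal{L}_\xi F = c\,{}^{*}F$ with $c$ a real constant, the vanishing of $c$ being exactly the statement that $F$ is inherited. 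Thus I would assume $c \neq 0$ and seek a contradiction.

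The key observation is that a non-zero $c$ makes the \emph{full} Einstein--Maxwell solution time-periodic in the sense of this paper. Writing $\phi_t$ for the flow of $\xi$, one has $\phi_t^{*}g = g$ and, integrating $\mathcal{L}_\xi F = c\,{}^{*}F$ together with the identity ${}^{*}{}^{*}=-1$ on two-forms, $\phi_t^{*}F = \cos(ct)\,F + \sin(ct)\,{}^{*}F$. Hence the discrete map $\psi = \phi_{2\pi/c}$ leaves both $g$ and $F$ invariant and, since $\xi$ is timelike, carries every point into its chronological future. Because the metric is stationary one may choose the conformal factor $\Omega$ invariant under $\phi_t$, so the periodicity survives conformal compactification, and on $\scri^-$ the map $\psi$ acts as the constant translation $v \to v + 2\pi/c$, a supertranslation (\ref{st}) with $a \neq 0$. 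All the hypotheses of Theorem \ref{one} are therefore met by the full solution.

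Applying Theorem \ref{one} then yields a Killing vector $K$ of the whole system, timelike near $\scri^-$ and reducing to a translation $\partial_v$ on $\scri^-$, under which both the metric and the Maxwell field are invariant; in particular $\mathcal{L}_K F = 0$. It remains to identify $K$ with $\xi$. With the $v$-coordinate normalised so that $\phi_t$ acts as $v\to v+t$ on $\scri^-$, both $K$ and $\xi$ restrict to $\partial_v$ there, so $\xi - K$ is a Killing vector of the metric whose data on $\scri^-$ vanish; by the uniqueness of Killing fields from their asymptotic behaviour for analytic, weakly-asymptotically-simple space-times, $\xi - K = 0$. Then $\mathcal{L}_\xi F = \mathcal{L}_K F = 0$, contradicting $c \neq 0$, so in fact $c = 0$ and the Maxwell field is stationary.

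I expect the main obstacle to lie in the two places where the non-null, asymptotically Coulombic structure is used: first in invoking the non-inheritance lemma to obtain a \emph{single constant} $c$ rather than a function or a more degenerate pattern allowed for null fields, and second in the rigid identification $K=\xi$, which rests on the uniqueness of a Killing vector determined by its restriction to $\scri^-$. A genuinely null Maxwell field would have to be excluded separately; here asymptotic flatness together with stationarity should do this, but making that exclusion clean is the delicate part of the argument.
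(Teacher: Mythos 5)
Your strategy is genuinely different from the paper's: you reduce Corollary \ref{two} to Theorem \ref{one} by invoking the Michalski--Wainwright classification to turn non-inheritance into a duality rotation $\mathcal{L}_\xi F = c\,{}^*F$, and hence into time-periodicity of the full solution with period $2\pi/|c|$. The reduction is attractive, but it has genuine gaps. The most serious is the one you flag yourself: the classification you invoke holds only where the Maxwell field is \emph{non-null} (and gives a constant $c$ only on connected components of the non-null region); for null fields non-inheritance can take more general, non-constant forms, and the known non-inheriting examples with a static metric --- the Robinson--Trautman solutions with null electromagnetic field cited in the paper (\cite{SKH}, \S 28.2) --- are precisely of this type, so the null case cannot be waved away. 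Your claim that a stationary, weakly-asymptotically-simple field is ``Coulombic, hence non-null'' near $\scri^-$ is not proved and is not automatic: on $\scri^-$ one has $\phi_0=0$, and whenever the charge aspect vanishes (e.g.\ a pure dipole field) $\phi_1$ vanishes there as well, so the invariant $\phi_0\phi_2-\phi_1^2$ degenerates at $\scri^-$ and non-nullity in a neighbourhood requires an argument you do not supply. A second gap is the identification $K=\xi$: the ``uniqueness of Killing fields from their asymptotic behaviour'' is asserted without proof or reference, and without it you conclude only that \emph{some} Killing vector leaves $F$ invariant, not that $F$ inherits the symmetry generated by $\xi$. (This step is probably repairable --- the flow of $\xi$, suitably normalised and with $\Omega$ chosen invariant, preserves the whole coordinate construction of Section 3 and so must coincide with $\partial_v$ --- but that argument is missing.) Finally, two smaller points: the paper's definition of time-like periodicity requires the discrete isometry to advance \emph{every} point of the physical space-time chronologically, whereas stationarity makes $\xi$ timelike only near infinity (ergoregions are not excluded); and Theorem \ref{one} as stated yields only a Killing vector of the metric, so the invariance of the Maxwell field under $K$ must be extracted from the proof rather than the statement.

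The paper's own proof avoids all of these issues by working directly at $\scri^-$ in the NP formalism. Stationarity of the metric gives $\partial_v\tilde{\Phi}_{ij}=0$ with $\tilde{\Phi}_{ij}=\Omega^2\phi_i\bar{\phi}_j$, whence $\phi_i=e^{i\chi}\fii_i$ with $\fii_i$ $v$-independent and a single real phase $\chi$; Maxwell's equations (\ref{MR1})--(\ref{MR2}) evaluated on $\scri^-$ force $D\chi=0$ there component by component (no non-nullity is needed --- if $\phi_1=0$ the argument passes to $\phi_2$), and an induction using $D\Delta^{n}$ on (\ref{MR3})--(\ref{MR4}) and $D\Delta^{n+1}$ on (\ref{MR2}), together with the observation that a solution $\Delta^{n+1}\phi_2=av+b$ would inject a $v$-dependent term into $\tilde{\Phi}_{22}$ at ${\cal O}(\Omega^{2n+4})$, kills the $v$-dependence of all radial derivatives; analyticity then finishes. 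If you wish to salvage your route, the two items you must supply are a proof that the field is non-null in a full neighbourhood of $\scri^-$ (or a separate treatment of the null case) and the uniqueness statement identifying $K$ with $\xi$.
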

One can raise the question of inheritance also for Einstein-scalar
field solutions but the answer is rather different: for a massive
(complex) Klein-Gordon field there do exist solutions, the so-called
`boson stars', for which the metric is spherically-symmetric,
asymptotically-flat and static but the scalar field has a phase
linear in time (see e.g. \cite{BW}); however these solutions are not
analytic at infinity and, by a scaling argument, such
solutions do not exist with massless scalar fields. In a later
article, we shall obtain this result as a corollary of the result
corresponding to Theorem \ref{one}. In that subsequent work we start
from the conformal Einstein field equations with a general
energy-momentum tensor as a source and specialize them to scalar
field cases.

In Section \ref{ConfEMEq} we analyze the conformal Einstein-Maxwell
equations. We first rewrite Maxwell's equations in the unphysical
space-time, then translate the physical Bianchi identities and
obtain differential equations for the unphysical Weyl spinor and
Ricci spinor. In Appendix A we summarize a number of quantities,
their relations and behaviour under conformal transformations in the
Newman-Penrose formalism (\cite{NP}); these are extensively used in
the main text and in Appendices B and C. In particular, all
conformal equations for the gravitational and electromagnetic field
analyzed in terms of spinors in Section \ref{ConfEMEq} are projected
on the spin basis (i.e. the null tetrad) and written down in the
Newman-Penrose formalism in Appendix B.
In Section 3 a suitable coordinate system and a convenient
Newman-Penrose null tetrad which gives special values to some of the
Newman-Penrose spin-coefficients are introduced in the neighbourhood
of $\scri^-$. As noted above, these differ from those used by the
authors of \cite{GS}, and we shall show the differences explicitly in Section 3. In Appendix C we demonstrate that in contrast
to \cite{GS} our choice of gauge admits simple static (i.e.
`periodic') space-times like flat space and the Reissner-Nordstr\"om
metric.
In Section 4 we follow \cite{GS} (although in a different conformal
gauge) and study the problem in the NP formalism in the unphysical
space-time, with data on $\scri^-$. Assuming
periodicity along $\scri^-$ we first discover that the only
possibility is the independence of all geometric quantities of an
affine parameter $v$ along $\scri^-$. By induction we then prove
that all derivatives of all geometric quantities, including the
physical metric components, in the direction into the physical
space-time must also be $v-$independent. The proof of Theorem
\ref{one} and Corollary \ref{two} then follows from the assumed
analyticity.

This paper arose from a collaboration after P. T. posted his work
\cite{PT} on the gr-qc arXive and J. B. informed him that he and his
PhD. student M. S. were already engaged in tackling the same problem
\cite{SchBi}.


\section{The conformal Einstein-Maxwell equations}\label{ConfEMEq}

We first introduce conformal equations for the gravitational and
electromagnetic field in the formalism of 2-component spinors. In
Appendix \nolinebreak B these equations are written down explicitly
after the projection on a spin basis, in the form employed in the
Newman-Penrose formalism. In the physical space-time, Maxwell's
equations without sources are simply\footnote{Spinor indices are
labelled by $A, A^\prime, B, B^\prime,...$ and have values $0,1$.
The metric has signature -2.} (see e.g. \cite{JS})
\begin{eqnarray}
\tilde{\nabla}^{A A^\prime}\;\tilde{\phi}_{A B} = 0.\label{max1}
\end{eqnarray}
\noindent They are conformally invariant if under conformal
rescaling the Maxwell spinor $\phi_{AB}$ tranforms with conformal
weight $1$,
\begin{eqnarray}
\tilde{\phi}_{AB} = \Omega\,\phi_{AB},\label{ctm}
\end{eqnarray}
when the convention used in this article for conformal rescaling is
$\tilde\epsilon_{AB}=\Omega^{-1}\epsilon_{AB}$.

\noindent From the transformation of the derivative operator (see
(\ref{CovDerTrans})), in the unphysical space-time equations
(\ref{max1}) become
\begin{eqnarray}
\nabla^{A A^\prime}\;\phi_{AB} = 0.\label{MaxEqs}
\end{eqnarray}

 The situation is more complicated in the case of the
gravitational field. The physical Bianchi identities read
\nopagebreak
\begin{eqnarray}
\tilde{\nabla}^D_{C^\prime} \tilde{\Psi}_{ABCD} &=& \tilde{\nabla}^{D^\prime}_{(C}\tilde{\Phi}_{AB)C^\prime D^\prime},
\end{eqnarray}
\noindent where $\tilde{\Psi}_{ABCD}$ and $\tilde{\Phi}_{ABC^\prime
D^\prime}$ are the Weyl and the Ricci spinor, respectively. Using
the rules for the conformal transformation of these spinors (eq.
(\ref{ConfTrans}) and (\ref{WeylConfTrans})) we find

\begin{eqnarray}
\fl \hspace{0.2cm}\Omega^2 \nabla^D_{C^\prime}\psi_{ABCD} = \Omega\;\nabla^{D^\prime}_{(C}\Phi_{AB)C^\prime D^\prime}\;+\;\left(\nabla^{D^\prime}_{(A}\Omega\right) \Phi_{BC)C^\prime D^\prime}\;+\;\nabla^{D^\prime}_{(C}\nabla_{A(C^\prime}\nabla_{D^\prime)B)}\Omega,\nonumber
\\\label{Bianki}
\end{eqnarray}

\noindent where $\psi_{ABCD}=\Omega^{-1}\Psi_{ABCD}$. These
equations are the physical Bianchi identities written in terms of
the quantities in the unphysical space-time. We simplify them by
employing Einstein's equations in the physical space-time,
\begin{eqnarray}
\tilde{\Phi}_{ABA^\prime B^\prime} &=  k\,\tilde{\phi}_{AB}\;\bar{\tilde{\phi}}_{A^\prime B^\prime}.\label{EM-Rov}
\end{eqnarray}
\noindent Here we used the fact that the physical scalar curvature
vanishes for the electromagnetic field; we put the constant factor
$k$ on the r.h.s. of (\ref{EM-Rov}) equal to 1 following the
convention of \cite{NP}, unlike, e.g. \cite{PR}.
\noindent From eqs. (\ref{ConfTrans}), (\ref{EM-Rov}) and (\ref{ctm}) we obtain
\begin{eqnarray}
\nabla_{A(A^\prime}\nabla_{B^\prime)B}\Omega &=  \Omega^3\;\phi_{AB}\;\bar{\phi}_{A^\prime B^\prime}\;\;-\;\;\Omega
\;\Phi_{ABA^\prime B^\prime}.\label{DDO}
\end{eqnarray}
\noindent Applying $\nabla_C^{D^\prime}$, symmetrizing and using
Maxwell's equations (\ref{MaxEqs}), we can express the term
containing the third derivative of $\Omega$ appearing in
(\ref{Bianki}) as follows:
\begin{eqnarray}
\hspace{0.2cm}\fl\nabla^{D^\prime}_{(C}\nabla_{A(C^\prime}\nabla_{D^\prime)B)}\Omega\;=\;\nonumber\\ \fl \hspace{0.3cm} 3\Omega^2\bar{\phi}_{C^\prime D^\prime}\phi_{(AB}\nabla^{D^\prime}_{C)}\Omega\;+\;\Omega^3\bar{\phi}_{C^\prime D^\prime}\nabla^{D^\prime}_{(C}\phi_{AB)}\;-\;\Omega \nabla^{D^\prime}_{(C}\Phi_{AB)C^\prime D^\prime}\;-\;(\nabla^{D^\prime}_{(C}\Omega)\Phi_{AB)C^\prime D^\prime}.\nonumber
\end{eqnarray}
\noindent Inserting this result into (\ref{Bianki}) we arrive at the
conformal Bianchi identities for the Einstein-Maxwell field
expressed in terms of the quantities in the unphysical space-time:
\begin{eqnarray}
\nabla^D_{A^\prime}\psi_{ABCD} &=  3\,\bar{\phi}_{A^\prime B^\prime}\,\phi_{(A B}\,\nabla_{C)}^{B^\prime}\Omega\;\;+\;\;\Omega\,\bar{\phi}_{A^\prime B^\prime}\,\nabla_{(C}^{B^\prime}\,\phi_{AB)}.\label{Biankiki}
\end{eqnarray}
\noindent Projecting these equations onto the spin basis we obtain
the set of the equations which are explicitly written down (using
the NP formalism) in Appendix B, see (\ref{BBB1})-(\ref{BBB8}).
Equations (\ref{Biankiki}) are differential equations for the
unphysical Weyl spinor. To obtain the equations for the Ricci spinor
we use the Bianchi identities valid for quantities in the unphysical
space-time:
\begin{eqnarray}
\nabla^D_{C^\prime} \Psi_{ABCD} &=  \nabla^{D^\prime}_{(C}\Phi_{AB) C^\prime D^\prime}.
\end{eqnarray}

\noindent Combining the last two equations, we get
\nopagebreak
\begin{eqnarray}
\fl \hspace{0.2cm}\nabla^{B^\prime}_{(C}\Phi_{AB) A^\prime B^\prime} = \psi_{ABCD}\,\nabla^D_{A^\prime}\Omega\;+\;3\,\Omega\,\bar{\phi}_{A^\prime B^\prime}\,\phi_{(AB}\,\nabla_{C)}^{B^\prime}\,\Omega\;+\;\Omega^2\,\bar{\phi}_{A^\prime B^\prime}\nabla_{(C}^{B^\prime}\,\phi_{AB)}.\nonumber\\\label{Biankiki2}
\end{eqnarray}

In the following we shall also need the expression for quantities
$\nabla_{AA^\prime}\nabla_{BB^\prime}\Omega$. Let us decompose
$\nabla_{AA^\prime}\nabla_{BB^\prime}\Omega$ into its symmetric and
antisymmetric parts,
\begin{eqnarray}
\nabla_{AA^\prime}\nabla_{BB^\prime}\Omega = \nabla_{A(A^\prime}\nabla_{B^\prime)B}\Omega\;+\;\frac{1}{2}\;\epsilon_{A^\prime B^\prime}\;\nabla_{A C^\prime}\nabla_B^{C^\prime}\Omega.\label{derivationA}
\end{eqnarray}
\noindent The first term on the r.h.s. is given in (\ref{DDO}), the second term can be decomposed again:
\begin{eqnarray}
\nabla_{A C^\prime}\nabla_B^{C^\prime}\Omega &=  \nabla_{C^\prime (A}\nabla_{B)}^{C^\prime}\Omega\;\;+\;\;\frac{1}{2}\;\epsilon_{AB}\;\Box\Omega.\label{devD}
\end{eqnarray}
\noindent Since the operator $\nabla_{C^\prime (A}\nabla_{B)}^{C^\prime}$ is just the commutator $\nabla_{[a}\nabla_{b]}$ contracted by $\epsilon^{A^\prime B^\prime}$, it annihilates scalar quantities. Using equations (\ref{DDO}), (\ref{derivationA}) and (\ref{devD}) we obtain
\begin{eqnarray}
\nabla_{AA^\prime}\nabla_{BB^\prime}\Omega&= \Omega^3\,\phi_{AB}\,\bar{\phi}_{A^\prime B^\prime}\;-\;\Omega\,\Phi_{AB A^\prime B^\prime}\;+\;\frac{1}{4}\,\epsilon_{A^\prime B^\prime}\,\epsilon_{AB}\,\Box\Omega.\nonumber\\\label{devE}
\end{eqnarray}

It will be convenient to introduce the quantity
\begin{eqnarray}
F&= \frac{1}{2}\;\Omega^{-1}\;(\nabla_{A A^\prime}\Omega)\;(\nabla^{A A^\prime}\Omega),\label{DefOfF}
\end{eqnarray}
\noindent which can be seen to be smooth in the unphysical
space-time from the rule for the conformal transformation of the
scalar curvature (\ref{ConfTrans}) in the form
\begin{eqnarray}
\Box\Omega &=  4\,\Omega\,\Lambda\;-\;4\,\Omega^{-1}\,\tilde{\Lambda}\;+\;4\,F,
\end{eqnarray}
\noindent since the physical scalar curvature $\tilde{\Lambda}=0$
for the electromagnetic field. From equation (\ref{devE}) we now
obtain the following expressions for the second derivatives of
$\Omega$:
\begin{eqnarray}
\fl \nabla_{AA^\prime}\nabla_{BB^\prime}\Omega=\Omega^3\,\phi_{AB}\,\bar{\phi}_{A^\prime B^\prime}\;-\;\Omega\,\Phi_{AB A^\prime B^\prime}\;+\;\epsilon_{A^\prime B^\prime}\,\epsilon_{AB}\,(F\,+\,\Omega\,\Lambda).\nonumber\\\label{derE}
\end{eqnarray}

Finally we wish to derive expressions for $\nabla_{AA^\prime}F$.
Directly from the definition of the unphysical Riemann tensor and
from the decomposition (\ref{Riemann}), we have
\begin{eqnarray}
\fl\left(\nabla_{AA^\prime}\nabla_{BB^\prime} - \nabla_{BB^\prime}\nabla_{AA^\prime}\right)\nabla^{BB^\prime}\Omega &= - 2 \Phi_{ABA^\prime B^\prime} \nabla^{B B^\prime}\Omega + 6\,\Lambda\,\nabla_{AA^\prime}\,\Omega.\label{ContrRicci}
\end{eqnarray}
\noindent Employing Maxwell's equations (\ref{MaxEqs}) and the
contracted  Bianchi identities (\ref{SpinorBianchi}), we find that
equations (\ref{devE}) and (\ref{ContrRicci})  imply
\begin{eqnarray}
 \nabla_{AA^\prime}F &= \Omega^2 \,\phi_A^B \,\bar{\phi}_{A^\prime}^{B^\prime}\, \nabla_{BB^\prime}\Omega \,-\, \Phi_{ABA^\prime B^\prime}\nabla^{BB^\prime}\Omega + \Lambda\,\nabla_{AA^\prime}\Omega.\label{EqForF}
\end{eqnarray}
%

%
%

\section{Coordinates, tetrad and conformal gauge}\label{Kalibracia}

We assume that we have an analytic, time-periodic solution of the
Einstein-Maxwell equations and an analytic, time-periodic conformal
factor so that the unphysical metric with $\scri^-$ also has these
properties. We construct a convenient coordinate system and a Newman-Penrose null tetrad in
the neighborhood of $\scri^-$ (see Figs. \ref{NPcoords},
\ref{NPtetrad}). We stay in the unphysical space-time in order to
include $\scri^-$. Let ${\cal S}\subset\scri^-$ be a particular space-like 2-sphere. We
can introduce arbitrary coordinates $x^I, I=2,3$ on ${\cal S}$ and
propagate them along $\scri^-$ by the condition
\begin{eqnarray}
\nabla_{\dot{\gamma}}\,x^I &= 0,
\end{eqnarray}
\noindent where $\gamma=\gamma(v)$ is an affinely-parametrized  null
generator of $\scri^-$. We may set $v=0$ on ${\cal S}$. The triple
$(v, x^2, x^3)$ represents suitable coordinates on $\scri^-$. In
order to go into the interior of space-time we introduce the family
of null hypersurfaces ${\cal N}_v$ orthogonal to $\scri^-$ and
intersecting $\scri^-$ in the space-like cuts ${\cal S}_v$ of
constant $v$. Let $\gamma^\prime = \gamma^\prime(r)$ be the null
generators of the surface ${\cal N}_v$ labeled by $x^I$. Here, $r$
is the affine parameter which can be chosen so that $r=0$ on
$\scri^-$ and $g(\d v, \d r)=1$ at $\scri^-$. We propagate the
coordinates $v$ and $x^I$ onto ${\cal N}_v$ by conditions
\begin{eqnarray}
\nabla_{\dot{\gamma}^\prime}\,x^I\;=\;0,&\;\;&\nabla_{\dot{\gamma}^\prime}\,v\;=\;0.
\end{eqnarray}
\noindent We thus have established a coordinate chart
\begin{eqnarray}
x^\mu &=  (v,\;r,\; x^2,\;x^3)\;,\;\;\mu\;=\;0, 1, 2, 3\;,
\end{eqnarray}
\noindent in the neighbourhood of past null infinity.
\footnote{Components of tensors with respect to the basis induced by
these coordinates will be labelled by Greek letters $\mu, \nu, ...$.
Components with respect to an arbitrary tetrad will be labelled by
Latin letters $a, b, ...$ from the beginning of the alphabet.
Indices labelled by capital letters $I,J,..$ have values $2,3$. }

\begin{figure}[ht]
  \begin{center}
    \includegraphics[keepaspectratio,width=0.5\textwidth]{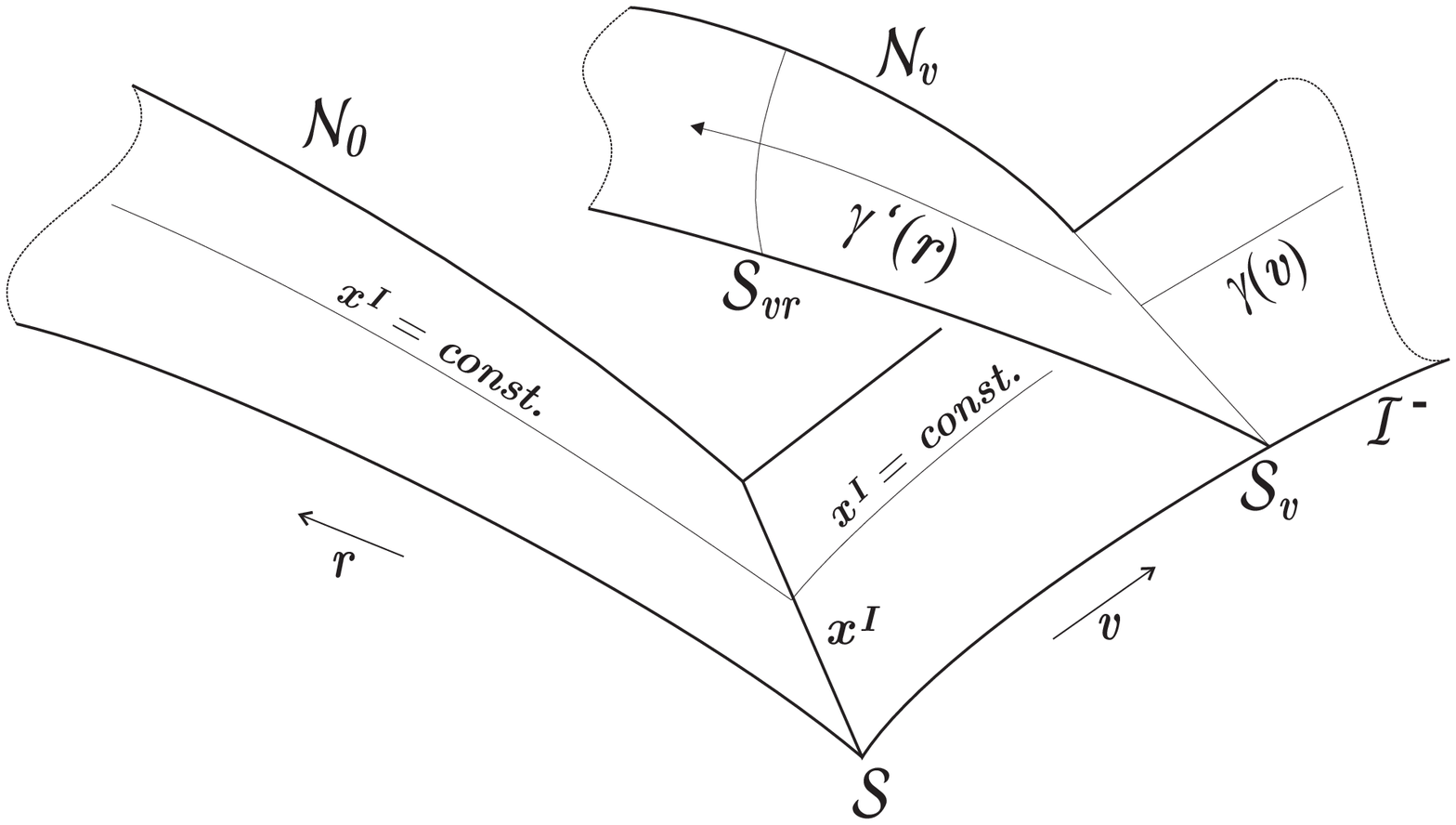}
    \caption{Construction of coordinate system. }
    \label{NPcoords}
  \end{center}
\end{figure}

Next we construct a suitable Newman-Penrose null tetrad. ${\cal
N}_v$ are null hypersurfaces $v = {\rm constant}$, therefore the
gradient of $v$ is both tangent and normal to ${\cal N}_v$; we
denote it by
\nopagebreak
\begin{eqnarray}
n_a &=  \nabla_a\;v\;.\label{ndef}
\end{eqnarray}
\noindent Since $n^a$ is tangent to $\gamma^\prime$ along which only $r$
varies,
\begin{eqnarray}
n&= \;\frac{\partial}{\partial r}\;\;.\label{Defn}
\end{eqnarray}
\noindent On each cut ${\cal{S}}_{vr}:v,r =$ constant there exists
exactly one null direction normal to ${\cal S}_{vr}$ not
proportional to $n^a$. We choose the vector field $l^a$ to be
parallel to this direction and normalize it by $n_a l^a=1$. It can
be written in the form \nopagebreak
\begin{eqnarray}
l &=  \frac{\partial}{\partial v}\;-\;H\,\frac{\partial}{\partial r}\;+\;C^I\,\frac{\partial}{\partial x^I}.\label{DefL}
\end{eqnarray}
\noindent On $\scri^-$ $l$ is tangent to the generators $\gamma(v)$,
so functions $H$ and $C^I$ vanish on $\scri^-$. The conformal gauge
can be chosen so that
\begin{eqnarray}
\frac{\partial \Omega}{\partial r} &=  1\;\;\;\;{\rm on}\;\;\scri^-.\label{DeltaOmega}
\end{eqnarray}

\begin{figure}[ht]
  \begin{center}
    \includegraphics[keepaspectratio,width=0.5\textwidth]{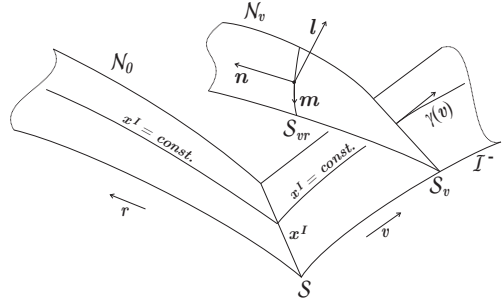}
    \caption{NP null tetrad. }
    \label{NPtetrad}
  \end{center}
\end{figure}

Let us now turn to the 2-spheres ${\cal S}_{vr}$ on which
$\partial_I$ are basis vectors. Since ${\cal S}_{vr}$ is a
space-like sphere, we choose, following standard procedure, a
complex vector $m$ and its complex conjugate $\bar{m}$, such that
\begin{eqnarray}
m^a\;m_a\;\;=\;\;0\;,&\;\; m^a\;\bar{m}_a\;\;=\;\;-1\;;
\end{eqnarray}
\noindent $m$ has the form
\begin{eqnarray}
m &=  P^I\;\frac{\partial}{\partial x^I}\;,\label{Defm}
\end{eqnarray}
\noindent where $P^2, P^3$ are complex functions. The coordinates
$x^I$ can be chosen to be the standard spherical coordinates,
$x^I=(\theta,\phi)$. Then the appropriate choice of the null vector
$m$ at $\scri^-$ is (see e.g.   \cite{JS})
\begin{eqnarray}
m\;\;=\;\; \frac{1}{\sqrt{2\,}}\,\left(\partial_\theta\;+\;\frac{i}{\sin\theta}\,\partial_\phi\right),&\;\;P^I\;\;=\;\;\frac{1}{\sqrt{2}}\,\left(1, \frac{i}{\sin\theta}\right).\label{SphrclCrds}
\end{eqnarray}
The vectors $m,\bar{m}$ are orthogonal to $l$ and $n$. The contravariant components of the tetrad read
\begin{eqnarray}
\eqalign{
l^\mu &=  (1, -H, C^2, C^3),\\
n^\mu &=  (0, 1, 0, 0),\label{tet2}\\
m^\mu &=  (0, 0, P^2, P^3).}
\end{eqnarray}
\noindent The contravariant components of the metric tensor are
given, regarding the relation $g^{\mu\,\nu}=2 l^{(\mu} n^{\nu)}-2
m^{(\mu}\bar{m}^{\nu)}$, by the matrix
\begin{eqnarray}
\fl g^{\mu\,\nu} &=  \left(
                   \begin{array}{cccc}
                     0 & 1 & 0 & 0 \\
                     1 & - 2 H & C^2 & C^3 \\
                     0 & C^2 & - 2 P^2\,\bar{P}^2 & - P^2\,\bar{P}^3- P^3\,\bar{P}^2 \\
                     0 & C^3 & - P^2\,\bar{P}^3- P^3\,\bar{P}^2 & -2 P^3 \, \bar{P}^3 \\
                   \end{array}
                 \right).                          \label{MetricTensor}
\end{eqnarray}
\noindent Using (\ref{tet2}) and the inverse of (\ref{MetricTensor}) we find the covariant components of the tetrad vectors:
\begin{eqnarray}
\eqalign{
l_\mu &=  (H, 1, 0, 0),\\
n_\mu &=  (1, 0, 0, 0),\\
m_\mu &=  (\omega, 0, R_2, R_3),}
\end{eqnarray}
\noindent where
\begin{eqnarray}
\eqalign{
R_2 \;=\;\frac{ P^3 }{P^2\,\bar{P}^3\,-\,P^3\,\bar{P}^2},&R_3 \;=\;\frac{ P^2 }{P^3\,\bar{P}^2\,-\,P^2\,\bar{P}^3},\\
\omega\;=\;-\,C^I\,R_I.}
\end{eqnarray}
\noindent The covariant components of the metric are
\begin{eqnarray}
\fl g_{\mu\,\nu} &=  \left(
                   \begin{array}{cccc}
                     2H-2\omega\bar{\omega} & 1 & -\omega \bar{R}_2-\bar{\omega}R_2  & -\omega \bar{R}_3-\bar{\omega}R_3 \\
                     1 & 0 & 0 & 0 \\
                     -\omega \bar{R}_2-\bar{\omega}R_2 & 0 & - 2 R_2 \bar{R}_2 & -R_3\bar{R}_2 - R_2\bar{R}_3 \\
                     -\omega \bar{R}_3-\bar{\omega}R_3 & 0 & -R_3\bar{R}_2 - R_2\bar{R}_3 & -2 R_3 \bar{R}_3 \\
                   \end{array}
                 \right).\label{CovariantMetric}
\end{eqnarray}
\noindent The vectors $l, n, m$ and $\bar{m}$ constitute the NP
tetrad. However, it is not unique since there is a rotation gauge
freedom $m\rightarrow e^{i\chi} m$ which will be used later.
Following the standard notation of the NP formalism (e.g.
\cite{NP}, \cite{JS}), we define the operators:
\begin{eqnarray}
D\;\;=\;\;l^a\;\nabla_a\;\;,\;\;\;\;\Delta\;\;=\;\;n^a\;\nabla_a\;\;,\;\;\;\;\delta\;\;=\;\;m^a\;\nabla_a\;.\label{DifOps}
\end{eqnarray}
\noindent We shall also employ the spin basis $(o^A,\iota^A)$ associated with the null tetrad,
\begin{eqnarray}
l^a\;\;=\;\;o^A\;\bar{o}^{A^\prime}\;,\;\;\;n^a\;\;=\;\;\iota^A\;\bar{\iota}^{A^\prime}\;,\;\;\;m^a\;\;=\;\;o^A\;\bar{\iota}^{A^\prime}\;,
\end{eqnarray}
\noindent normalized by $o_A\,\iota^A\,=\,1$.
Note that this coordinate and tetrad system has
some more gauge freedom associated with it. In particular we may
make another choice $\hat\Omega$ with $\hat\Omega=\Theta\Omega$
where $\Theta$ is also periodic and takes the value one at
$\scri^-$. Thus
\[
\tilde{g}_{ab}=\hat\Omega^{-2}\hat{g}_{ab}=\Omega^{-2}g_{ab},
\]
and so $\hat{g}_{ab}=\Theta^2g_{ab}$. We assume that $\Theta=1+f(v,r,x^I)$ with $f=O(r)$. This will change
the definition of the affine parameter $r$ , to $\hat{r}$ say, and then we must accompany the change of conformal factor with a null-rotation of the tetrad so that $\hat\delta$ is tangent to the sphere ${\cal{S}}_{v\hat{r}}$, thus

\begin{eqnarray}
\eqalign{
\hat{n}_a &= n_a,\\
\hat{m}_a &=\Theta(m_a+Zn_a),\\
\hat{l}_a &=\Theta^{2}(l_a+Z\bar{m}_a+\bar{Z}m_a+Z\bar{Z}n_a),\label{nullrot1}}
\end{eqnarray}

\noindent where $Z$, which parametrises the null rotation, is fixed by
requiring $\hat\delta \hat{r}=0$; the associated operators change
according to

\begin{eqnarray}
\eqalign{
\hat\Delta &= \Theta^{-2}\Delta,\\
\hat\delta &= \Theta^{-1}(\delta+Z\Delta),\\
\hat{D}    &= D+Z\bar\delta+\bar{Z}\delta+Z\bar{Z}\Delta.\label{nullrot2}}
\end{eqnarray}

With the coordinate $v$ common to both systems, we define $\hat{r}$
as the affine parameter with
\[\hat\Delta\hat{r}=\Theta^{-2}\Delta \hat{r}=1.\]
This can be integrated to give
\begin{eqnarray}\hat{r}&=\int^r_0\Theta^{2}dr=r+O(r^2),\label{rhat}\end{eqnarray}
and we need
\[0=\hat\delta\hat{r}=\Theta^{-1}(\delta\hat{r}+Z\Delta\hat{r}),\]
so that
\[Z=-\Theta^{-2}\delta\hat{r},\]
which can be calculated from (\ref{rhat}). Note that $Z=O(r^2)$. We
shall need to exploit this gauge freedom below.
Next we examine what special values some of the spin coefficients
take due to the above choice of the null tetrad (we calculate for
the unhatted system, but the same relations hold in the hatted
systems). Acting by the commutators
(\ref{commutators}) on the coordinate $v$, we find
\begin{eqnarray}
\gamma+\bar{\gamma}\;=\;\bar{\alpha}+\beta-\bar{\pi}\;=\;\nu\;=\;\mu-\bar{\mu} &=  0.\label{spingauge1}
\end{eqnarray}
\noindent Furthermore, commutators $[\delta,\Delta]r$ and $[\bar{\delta},\delta]r$ give
\begin{eqnarray}
\tau-\bar{\alpha}-\beta\;=\;\rho-\bar{\rho} &= 0.\label{spingauge2}
\end{eqnarray}
\nopagebreak
\noindent Applying the remaining commutators on the variables $v, r$
and $x^I$ leads to the ``frame equations", i.e.  the equations for
the metric functions $H, C^I$ and $P^I$:
\numparts
\begin{eqnarray}
\Delta\,H &=  -\,(\eps+\bar{\eps}),\label{FEq1}\\
\delta\,H &=  -\,\kappa,\label{FEq2}\\
\Delta\,C^I &=  -\,2\,\pi\,P^I\;-\;2\,\bar{\pi}\,\bar{P}^I,\label{FEq3}\\
\bar{\delta}\,P^I\;-\;\delta\,\bar{P}^I &=  (\alpha-\bar{\beta})\,P^I\;-\;(\bar{\alpha}-\beta)\,\bar{P}^I,\label{FEq4}\\
\Delta\,P^I &=  -\,(\mu - \gamma + \bar{\gamma})\,P^I\;-\;\bar{\lambda}\,\bar{P}^I,\label{FEq5}\\
\delta\,C^I\;-\;D\,P^I &=  -\,(\rho+\eps-\bar{\eps})\,P^I\;-\;\sigma\,\bar{P}^I.\label{FEq6}
\end{eqnarray}
\endnumparts
~\\
Since the generators $\gamma(v)$ of $\scri^-$ are affinely parametrized null geodesics, $D l^a = 0$ on $\scri^-$. Comparing this with
the general relation
\begin{eqnarray}
D l^a &=
(\varepsilon\;+\;\bar{\varepsilon})\,l^a\;\;-\;\;\bar{\kappa}\,m^a\;\;-\;\;\kappa\;\bar{m}^a\;,
\end{eqnarray}
\noindent  we see that
\begin{eqnarray}
\eps+\bar{\eps}\;=\;\kappa&= 0\;\;\;\;{\rm on}\;\;\scri^-.\label{gaugeeps0}
\end{eqnarray}

Next we wish to show that the freedom in choosing the basis
$(m,\bar{m})$ of the space tangential to ${\cal S}_{vr}$ allows us
to set $\gamma=0$. From the definition of $\gamma$ (eq.
\ref{SpinCoeff}) we have $\gamma-\bar{\gamma} =
m^a\Delta\bar{m}_a$. Under the rotation through $\chi$,
\nopagebreak
\begin{eqnarray}
m^a\rightarrow e^{i\chi}m^a, \label{mspin}
\end{eqnarray}

\noindent the quantity $\gamma-\bar{\gamma}$ transforms according to

\begin{eqnarray}
\gamma-\bar{\gamma} &\rightarrow& \gamma-\bar{\gamma} \;+\;i\Delta\chi,
\end{eqnarray}
\noindent so by solving the equation
\begin{eqnarray}
\Delta\chi &=  i(\gamma-\bar{\gamma}),
\end{eqnarray}
\noindent and regarding (\ref{spingauge1}) we can set
\begin{eqnarray}
\gamma &=  0.\label{gaugegamma}
\end{eqnarray}
Because the $\Delta-$operator is the derivative with respect  to the
coordinate $r$, further rotation (\ref{mspin}) with an
$r-$independent function $\chi$ does not violate the equality
(\ref{gaugegamma}). The quantity $\eps-\bar{\eps}$ under the
rotation (\ref{mspin}) transforms according to
\begin{eqnarray}
\eps\;-\;\bar{\eps} &\rightarrow& \eps\;-\;\bar{\eps}\;+\;i D\chi.
\end{eqnarray}
\noindent Solving the equation
\nopagebreak
\begin{eqnarray}
D\chi &= i(\eps\;-\;\bar{\eps})
\end{eqnarray}
\noindent on $\scri^-$, where $r=0$, we set $\eps=\bar{\eps}$ which, together with (\ref{gaugeeps0}), implies
\begin{eqnarray}
\eps &=  0\;\;\;\;{\rm on}\;\;\scri^-.\label{gaugeeps1}
\end{eqnarray}
To end this section, we exploit the gauge
freedom (\ref{nullrot1}) and (\ref{nullrot2}) to achieve a further
simplification. From the commutator $[\hat\delta,\hat\Delta]$ (see (\ref{commutators}) with the values of the spin coefficients fixed above) we
calculate
\[\hat\mu=\Theta^{-2}(\mu+\Theta^{-1}\Delta\Theta),\]
so that we can set $\hat\mu=0$ by choosing
\[\Theta=\mathrm{exp}\left(-\int^r_0\mu \,\d r\right).\]
Having done this, we omit the hats.

In order to elucidate the differences between our choice of the coordinate system and the null tetrad and those used by Gibbons and Stewart, we conclude this section by giving the details of their construction. Instead of the affine parameter $r$ they use coordinate $u$, defined as follows. Let ${\cal S}_0'$ be a spacelike cut on $\scri^-$ and ${\cal N}^\prime$, ${\cal S}'_0\subset {\cal N}^\prime$, the null hypersurface  such that the null generators of ${\cal N}^\prime$ are orthogonal to ${\cal S}'_0$. Now, the real function $u$ on ${\cal N}^\prime$ is defined in such a way, that $u=0$ on ${\cal S}_0'$, and $u=$constant on spacelike two-surfaces ${\cal S}_u$. The cut ${\cal S}_u$ defines another null hypersurface ${\cal N}_u$ with null generators orthogonal to ${\cal S}_u$. The coordinate $u$ is obtained by setting $u=$constant on ${\cal N}_u$. Similarly, the family of null hypersurfaces ${\cal N}^\prime_v$ orthogonal to spacelike cuts ${\cal S}^\prime_v$ on $\scri^-$, with $v$ being the affine parameter along the null generators of $\scri^-$, is constructed. Coordinates $x^I$ are chosen freely on ${\cal S}'_0$ and propagated into the spacetime along ${\cal N}^\prime$ and ${\cal N}_u$. The functions $x^\mu = (u,v,x^2, x^3)$ constitute a coordinate system in the neighbourhood of $\scri^-$ but note that in these coordinates the vector field $\partial_v$ is null, which it is not in our coordinates.

The NP tetrad used in \cite{GS} consists of vectors $l$, tangent to ${\cal N}_u$, $n$, tangent to ${\cal N}^\prime_v$, and $m,\bar{m}$ spanning the tangent space of ${\cal S}'_0$ and propagated into the space-time. The coordinate expression of the tetrad reads (this should be compared with our expressions (\ref{Defn}), (\ref{DefL}) and (\ref{Defm}))

\begin{eqnarray}
l \;=\;Q\,\partial_v,\;\;\;\;n\;=\;\partial_u + C^I\,\partial_I,\;\;\;\;m\;=\;P^I\,\partial_I,\label{GSNPtetrad}
\end{eqnarray}

\noindent where $Q, C^I$ and $P^I$ are metric functions. In this tetrad, the following equation holds:

\[
\Delta n^a \;=\;-(\gamma+\bar{\gamma}) n^a.
\]

\noindent Therefore, the null generators of ${\cal N}^\prime_v$ are geodesics, but $u$ is not an affine parameter.

The periodicity of the spacetime is defined as the periodicity of all geometrical quantities in the variable $v$. It is shown in \cite{GS} that $K=\partial_v$ is the Killing vector of the metric and concluded that the spacetime is stationary. However, $K$ is {\it null} everywhere by construction as it is tangent to the null generators of ${\cal N}_u$, while the stationarity requires the timelike Killing vector. Thus, it is impossible to conclude that the spacetime is stationary from the fact that $K$ is the Killing vector. As was mentioned in the introduction, even the Minkowski spacetime does not possess the Killing vector which is everywhere null and tangent to $\scri^-$.

In the following, we use the coordinates and the tetrad introduced in the beginning of this section.We show that $K=\partial_v$ is a space-time Killing vector which is null on $\scri^-$, but {\it timelike} in its neighbourhood.

\section{Proof of the theorem}

Having chosen coordinates and tetrad and fixed special values of
some of the NP coefficients we now analyse all geometric quantities
assuming analyticity in the chosen coordinates and periodicity on
$\scri^-$ in $v$. Following \cite{GS} we introduce the notation

\begin{eqnarray}
\eqalign{
S_0\;=\;D\Omega\;,\;\;S_1\;=\;\delta\Omega\;,\;\;S_2\;=\;\Delta\Omega\;,\label{Notation}
\\F\;=\;\frac{1}{\Omega}\left(S_0\,S_2\;-\;S_1\,\bar{S}_1\right)\;,
\psi_n\;=\;\frac{\Psi_n}{\Omega}\;,\;\;n=0,1,2,3,4,}
\end{eqnarray}

\noindent where $\Psi_n$ are the NP components of the Weyl spinor
(see eq. (\ref{psicomps2})). In the case of asymptotically-flat
space-time they vanish on $\scri^-$, so assuming smoothness, the
$\psi_n$ are regular there. Tangential derivatives of the conformal
factor vanish on $\scri^-$, i.e.  $S_0=S_1=0$, and so, again by
smoothness, the quantity $F$ is regular on $\scri^-$. The remaining
component of $\nabla\Omega$ is $S_2$ which is 1 on $\scri^-$ (cf.
(\ref{DeltaOmega})), so that its tangential derivatives also vanish
on $\scri^-$. Equations (\ref{derE}) and (\ref{EqForF}) are explicitly written
down in the NP formalism in Appendix B as (\ref{PrjA1}) -
(\ref{PrjB3}). Equations (\ref{PrjA4})-(\ref{PrjA10}) show that on $\scri^-$

\numparts
\begin{eqnarray}
\sigma &= \;0,\label{IC1}\\
F &= \; 0,\label{IC2}\\
\rho &=  \;0\label{IC7}\\
\bar{\pi}&= \;0\;\;=\;\;\beta\;+\;\bar{\alpha} \;=\; \tau,\label{IC3}\\
\Delta S_0 &= \; 0,\label{IC4}\\
\Delta S_2 &=  \;0,\label{IC5}\\
\Delta S_1 &=  \;0.\label{IC6}
\end{eqnarray}
\endnumparts
~\\
\noindent Since $F=0$ on $\scri^-$, also the tangential derivatives
$D F$ and $\delta F$ vanish there. From equations (\ref{PrjB1}) and
(\ref{PrjB2}) we thus obtain
\begin{eqnarray}
\Phi_{00}\;\;=\;\;\Phi_{01} &=  0\;\;\;\;{\rm on}\;\;\scri^-.\label{phi0001}
\end{eqnarray}
\noindent The metric functions $P^I$ on $\scri^-$ are given by
(\ref{SphrclCrds}). Inserting this expression into the frame
equation (\ref{FEq4}) and using relation (\ref{IC3}) we find
\begin{eqnarray}
\alpha\;\;=\;\;-\,\beta &=  -\,\frac{1}{2\,\sqrt{2}}\,\cot\theta\;\;\;\;{\rm on}\;\;\scri^-\label{alphabeta}.
\end{eqnarray}
\noindent The Ricci identity (\ref{RI17}) now shows that
\nopagebreak
\begin{eqnarray}
\Lambda\;+\;\Phi_{11} &=  \frac{1}{2}\;\;\;\;{\rm on}\;\;\scri^-.\label{LambdaPhi11}
\end{eqnarray}

In order to discover the behaviour of the other relevant quantities
we shall take into account the properties of the Bondi mass. In a
general asymptotically flat electrovacuum space-time the total mass-energy at
$\scri^+$ is defined by the formula (see e.g.  \cite{ENP})

\begin{eqnarray}
M_B &=  -\;\frac{1}{2\sqrt{\pi}}\int\d S\;\left( \tilde{\Psi}_2^0\;+\;\tilde{\sigma}^0\,\dot{\bar{\tilde{\sigma}}}^0\right).
\end{eqnarray}

\noindent By the superscript $0$ we denote the leading term in the
asymptotic expansion of a quantity, superscripts $1,2,...$ then
denote higher-order terms, for example, $\tilde{\sigma}=\tilde{\sigma}^0\,\Omega^2 + \tilde{\sigma}^1
\Omega^3 +{\cal O}(\Omega^4)$.  The rate of decrease of the Bondi
mass is given by

\begin{eqnarray}
\dot{M}_B &=  -\;\frac{1}{2\sqrt{\pi}}\int\d S\;\left( \dot{\tilde{\sigma}}^0\,\dot{\bar{\tilde{\sigma}}}^0\;+\;\tilde{\phi}_2^0\,\bar{\tilde{\phi}}_2^0\right).
\end{eqnarray}

\noindent The quantities $\sigma$ and $\phi_i$, $i = 0,1,2$, are defined in (\ref{SpinCoeff}) and (\ref{PhiComps}). Following the "conversion table" between $\scri^+$ and
$\scri^-$ (see (\ref{Kudryjas})), we analogously define the Bondi
mass at $\scri^-$ by

\begin{eqnarray}
M_B &= -\; \frac{1}{2\sqrt{\pi}}\int\d S\;\left( \tilde{\Psi}_2^0\;+\;\tilde{\lambda}^0\,\dot{\bar{\tilde{\lambda}}}^0\right).\label{Heqinur}
\end{eqnarray}

\noindent Since radiation comes into the physical
space-time through $\scri^-$ but can't exit through it, the total
mass energy at $\scri^-$ cannot decrease. Its rate of change in
(advanced) time $v$ along $\scri^-$ is given by
\nopagebreak
\begin{eqnarray}
\dot{M}_B &=  \frac{1}{2\sqrt{\pi}}\int\d S\;\left( \dot{\tilde{\lambda}}^0\,\dot{\bar{\tilde{\lambda}}}^0\;+\;\tilde{\phi}_0^0\,\bar{\tilde{\phi}}_0^0\right).
\end{eqnarray}

\noindent Now we assume periodicity. But a non-decreasing periodic
function must be a constant. Hence, our assumption of periodicity of
the mass-energy at $\scri^-$ requires

\begin{eqnarray}
\dot{\tilde{\lambda}}^0\;=\;0,\;\; \tilde{\phi}_0^0\;=\;0\;.
\end{eqnarray}

\noindent The leading term in the asymptotic expansion of
$\tilde{\Psi}_0$ is then
$\tilde{\Psi}_0^0=\ddot{\tilde{\bar{\lambda}}}^0=0$. Regarding
equations (\ref{PhiTrans}) and (\ref{PsiOrd}) and putting
$\tilde{\Psi}_0^0=0$, we can write the asymptotic expansion of
$\Psi_0$ near $\scri^-$ as
\begin{eqnarray}
\Psi_0 &=  \Psi_0^1\;\Omega^2\;+\;{\cal O}(\Omega^3),\label{psi0}
\end{eqnarray}
\noindent or (cf. eq. (\ref{Notation}))
\begin{eqnarray}
\psi_0 &=  {\cal O}(\Omega).
\end{eqnarray}
\noindent Equation (\ref{psi0}) implies
\begin{eqnarray}
\Delta\Psi_0 &=  0\;\;\;\;{\rm on}\;\;\scri^-\;.\label{DeltaPsi0}
\end{eqnarray}
\noindent Similarly, eq. (\ref{phitrans}), where we put
$\tilde{\phi}_0^0=0$, implies $\phi_0\in{\cal O}(\Omega)$ and
\nopagebreak
\begin{eqnarray}
\Delta \phi_0 &=  \phi_0^1\;S_2\;\;\;\;{\rm on}\;\;\scri^-\;.
\end{eqnarray}

The geometrical quantities consist
of the tetrad components, which give the metric functions, the spin
coefficients and the components of the Weyl and the Ricci tensor on
$\scri^-$. Because of our assumption of the periodicity of
gravitational field, the geometrical quantities are all assumed to be periodic in the variable $v$ on
$\scri^-$. We do not assume the periodicity of the electromagnetic
field since this field may not have the same symmetries as the
gravitational field (this is the issue of inheritance which we shall
return to).
We have shown that the following spin coefficients vanish on
$\scri^-$ (and thus do not depend on $v$):

\begin{eqnarray}
\mu,\rho,\sigma,\kappa,\eps,\nu,\gamma,\pi,\tau.
\end{eqnarray}

\noindent The spin coefficients $\alpha$ and $\beta$ are
$v-$independent because of (\ref{alphabeta}). Now we wish to show
that also the last spin coefficient $\lambda$ is independent of $v$.
The Bianchi identity (\ref{BIB1}) together with (\ref{DeltaPsi0})
and (\ref{phi0001}) shows that
\begin{eqnarray}
D\Phi_{02} &=  0\;\;\;\;{\rm on}\;\;\scri^-.\label{DPhi02}
\end{eqnarray}
\noindent If we now apply $D$ to the Ricci identity (\ref{RI7}), we
get
\begin{eqnarray}
D^2\lambda &=  0 \;\;\;\;{\rm on}\;\;\scri^-.
\end{eqnarray}
\noindent The general solution of this equation on $\scri^-$ is
\begin{eqnarray}
\lambda &=  \lambda^{(0)}\;+\;v\,\lambda^{(1)},
\end{eqnarray}
\noindent where $\lambda^{(0)}$ and $\lambda^{(1)}$ are functions
independent of $v$. Since $\lambda$ is assumed to be periodic and a
polynomial in $v$ can be periodic only if it is constant, we get
$\lambda=\lambda^{(0)}$ and
\begin{eqnarray}
D\lambda &=  0\;\;\;\;{\rm on}\;\;\scri^-.\label{Dlambda}
\end{eqnarray}
\noindent (we borrow this style of argument from \cite{GS} where it
is used extensively). The Ricci identity (\ref{RI7}) then
implies
\begin{eqnarray}
\Phi_{02} &=  0\;\;\;\;{\rm on}\;\;\scri^-.\label{Phi02}
\end{eqnarray}
\noindent The Ricci identity (\ref{RI8}) on $\scri^-$
becomes
\begin{eqnarray}
\Lambda &=0,\label{Dmu=l}
\end{eqnarray}
and then by (\ref{LambdaPhi11}) $\Phi_{11}=1/2$ there. Now from
(\ref{BIA3}) and $D$ on (\ref{RI11}), $D\Phi_{12}$ and $D\Phi_{22}$
vanish at $\scri^-$.
\medskip
\noindent We collect these
results and some similar ones as a lemma:
\begin{lemma}
The following are zero on $\scri^-$
\[H,C^A,\rho, \sigma,\pi,\kappa,\epsilon,S_0, S_1,
F,\psi_{0},\Phi_{00},\Phi_{01},\Phi_{02},\phi_0,\Lambda,\]
\[DP^A,D\alpha, D\beta,DS_2,D\lambda,D\Phi_{11}, D\Phi_{12},D\Phi_{22},D\psi_{1},D\psi_{2}, D\psi_{3},D\psi_{4},D\phi_1,D\phi_2,\]
\[D\Delta S_0,D\Delta S_1,D\Delta S_2.\]
\end{lemma}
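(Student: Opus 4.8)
The lemma simply consolidates what has already been extracted on $\scri^-$, together with a few entries of the same type, so the plan is to record the direct results and then to fill the remaining derivative entries by the two mechanisms already on display. The organising observation is that in the present gauge $H=C^I=0$ on $\scri^-$ (the remark after (\ref{DefL})), so that $l$, and hence the operator $D$, reduces to $\partial_v$ there; consequently, for a scalar $Q$ the statement ``$DQ=0$ on $\scri^-$'' is \emph{precisely} the statement that $Q$ is independent of the advanced time $v$ along $\scri^-$. The one recurring tool, used already for $\lambda$, is that a function which is polynomial in $v$ and periodic must be constant.

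The first block needs no new work: $H,C^A$ vanish by construction (after (\ref{DefL})); $\rho,\sigma,\pi,\kappa,\epsilon$ are (\ref{IC7}), (\ref{IC1}), (\ref{IC3}), (\ref{gaugeeps0}), (\ref{gaugeeps1}); $S_0=S_1=0$ and $F=0$ are noted above and (\ref{IC2}); $\psi_0,\phi_0\in{\cal O}(\Omega)$ vanish on $\scri^-$; $\Phi_{00},\Phi_{01}$ are (\ref{phi0001}), $\Phi_{02}$ is (\ref{Phi02}), and $\Lambda=0$ is (\ref{Dmu=l}). The easy derivative entries then follow at once from $D=\partial_v$: $\alpha,\beta$ depend only on $\theta$ by (\ref{alphabeta}) and $P^A$ only on $\theta$ by (\ref{SphrclCrds}), so $D\alpha=D\beta=DP^A=0$; $S_2\equiv1$ by (\ref{DeltaOmega}) gives $DS_2=0$; $\Phi_{11}=1/2$ by (\ref{LambdaPhi11}) and (\ref{Dmu=l}) gives $D\Phi_{11}=0$; and since $\Delta S_0=\Delta S_1=\Delta S_2=0$ hold \emph{identically} along $\scri^-$ by (\ref{IC4})--(\ref{IC6}), differentiating gives $D\Delta S_0=D\Delta S_1=D\Delta S_2=0$. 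Finally $D\lambda=0$ is (\ref{Dlambda}), and $D\Phi_{12}=D\Phi_{22}=0$ are the consequence of (\ref{BIA3}) and $D$ on (\ref{RI11}) noted just above.

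This leaves the ``similar'' Weyl and Maxwell entries, which are coupled. For the rescaled Weyl scalars I would run the $\lambda$-template on the projected conformal Bianchi identities (\ref{BBB1})--(\ref{BBB8}) from (\ref{Biankiki}): on $\scri^-$ the surviving spin-coefficient terms collapse because $\kappa=\sigma=\rho=\epsilon=\pi=0$ and $\psi_0=0$, while the source degenerates (as $\Omega=0$, $S_0=S_1=0$, $S_2=1$) to bilinears in the Maxwell scalars. Since the \emph{gravitational} field, and hence each $\psi_n=\Psi_n/\Omega$, is assumed periodic, the resulting first- or second-order $D$-relations give $D\psi_n=0$ for $n=1,\dots,4$ \emph{provided} the Maxwell sources are $v$-independent. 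The clean Maxwell input is $D\phi_1=0$: the $D$-equation of (\ref{MaxEqs}) projected onto the tetrad reduces on $\scri^-$ to $D\phi_1=\bar\delta\phi_0+(\pi-2\alpha)\phi_0+2\rho\phi_1-\kappa\phi_2$, every term of which vanishes since $\phi_0=0$ and $\pi=\rho=\kappa=0$ there.

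The main obstacle is $D\phi_2=0$. The corresponding projected Maxwell equation collapses on $\scri^-$ only to $D\phi_2=\bar\delta\phi_1$, which is \emph{not} manifestly zero; a short computation with the commutators (\ref{commutators}) --- whose relevant coefficients vanish, with $\alpha+\bar\beta=0$ real there --- gives $D^2\phi_2=0$, so $\phi_2$ is affine in $v$, but \emph{because the electromagnetic field is deliberately not assumed periodic} this does not kill the linear term. This is exactly the non-inheritance phenomenon the paper must defeat, and the plan is to reach $D\phi_2=0$ through the gravitational field instead: the unphysical Ricci components $\Phi_{12},\Phi_{22}$ belong to the (periodic) metric and satisfy $D\Phi_{12}=D\Phi_{22}=0$, so feeding these, the Einstein equations (\ref{EM-Rov}), and the $\Phi$-Bianchi identity (\ref{Biankiki2}) --- in which the Ricci derivatives are sourced by $\psi\,\nabla\Omega$ and by $\phi$-bilinears --- together with the already-established $D\phi_1=0$, should pin down the $v$-independence of $\phi_2$ and thereby also close the Weyl argument above. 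Carrying out this coupling cleanly, and controlling the loci where $\phi_1$ degenerates, is where the real effort lies; the rest of the lemma is bookkeeping over results established above.
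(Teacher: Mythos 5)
Your treatment of everything except $D\phi_2$ is essentially the paper's own proof: the first line and the easy derivative entries are collected from the preceding text, $D\psi_1$ comes directly from (\ref{BBB1}), $D\psi_i$ for $i=2,3,4$ from applying $D$ to (\ref{BBB2})--(\ref{BBB4}) to get $D^2\psi_i=0$ and then invoking periodicity, $D\phi_1$ directly from (\ref{MR1}), and your route to the third line (differentiating the identities (\ref{IC4})--(\ref{IC6}) along $\scri^-$, rather than applying $D$ to (\ref{PrjA8})--(\ref{PrjA10}) as the paper does) is a harmless, if anything cleaner, variant. One structural claim, however, is spurious: the proviso that $D\psi_n=0$ holds only ``provided the Maxwell sources are $v$-independent.'' Every Maxwell term in (\ref{BBB1})--(\ref{BBB4}) carries a factor of $S_0$, $S_1$, $\bar S_1$, $\phi_0$ or $\Omega$, and each of these vanishes identically on $\scri^-$ together with its tangential derivative $D(\cdot)$, so a single application of $D$ annihilates all such terms at $\scri^-$ with no information about $D\phi_1$ or $D\phi_2$. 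The Weyl sector closes by itself; there is no coupling to defeat, and the elaborate ordering you impose (Maxwell first, then Weyl) is unnecessary.

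The genuine gap is that the entry $D\phi_2=0$ is never proved. You correctly get $D^2\phi_2=0$, i.e. $\phi_2=av+b$ on $\scri^-$, and correctly see that killing $a$ needs a further input; but your substitute for periodicity --- feeding $D\Phi_{12}=D\Phi_{22}=0$, (\ref{EM-Rov}) and (\ref{Biankiki2}) into an unspecified computation that ``should pin down'' the $v$-independence of $\phi_2$ --- is a sketch, not an argument, and you say yourself that this is ``where the real effort lies.'' The paper closes this case by using periodicity of $\phi_2$: despite the sentence before the lemma, the remark immediately after the proof of Theorem \ref{one} states the actual hypothesis (``in fact we assume slightly less, namely that the space-time is periodic and that $\phi_2$ is periodic''), and with that hypothesis the same $D$-then-periodicity procedure applied to (\ref{MR2}) finishes the lemma; you read the earlier sentence more literally than the proof does. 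If you do want to avoid assuming $\phi_2$ periodic, your strategy can be made to work, but only by a concrete step such as: apply $\Delta^3$ to (\ref{PrjA10}) (where $\nu=\gamma=0$ in this gauge) and evaluate at $\scri^-$; the only surviving contribution of $\Omega^3\phi_2\bar\phi_2$ is $6\phi_2\bar\phi_2$, so $|\phi_2|^2$ on $\scri^-$ equals a combination of $\Delta^kS_2$ and $\Delta^k\Phi_{22}$, which are geometric and hence periodic, whence $|av+b|^2$ is a periodic polynomial in $v$ and $a=0$. This is precisely the style of argument the paper uses for $\tilde\Phi_{22}$ in the proof of Corollary \ref{two}; some such explicit computation is what your write-up is missing at its crucial point.
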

\begin{proof}
The first line is done already, as is the second line up to
$D\psi_1$, which comes from (\ref{BBB1}). From $D$ applied to
(\ref{BBB2})-(\ref{BBB4}) we obtain $D^2\psi_i=0$ whence by
periodicity $D\psi_i=0$ at $\scri^-$, in order for $i=2,3,4$ . The
same procedure applied to (\ref{MR1}), (\ref{MR2}) takes care of
$D\phi_1,D\phi_2$. Then the third line follows from $D$ applied to
(\ref{PrjA8})-(\ref{PrjA10}).
\end{proof}
Now we turn to the proof of the Theorem. We set up an induction with
the following inductive hypothesis:
\bigskip
\noindent\emph{Suppose inductively that $\partial_v\Delta^jQ = 0$ at
$\scri^-$ for $0\leq j\leq k$ with $Q$ one of
\begin{eqnarray}\label{cond2}
&H,C^I,P^I,\epsilon,\pi,\lambda,\beta,\alpha,\rho,\sigma,\kappa,
F,\psi_{i},\Phi_{ij},\phi_i,\Lambda \end{eqnarray}
and for $0\leq j\leq k+1$ with $Q=S_i$.}

\noindent This is easily seen by the Lemma to hold for $k=0$, so we
need to deduce it for $j=k+1$ from its truth for $j\leq k$. In this
calculation, we use the fact that $\partial_v=D$ at $\scri^-$, and
make extensive use of the commutators (\ref{commutators}).
Under the inductive hypothesis, the inductive step follows
~\\

\begin{itemize}
\item
for $H, C^I,P^I$ from (\ref{FEq1}), (\ref{FEq3}) and (\ref{FEq5});
\item for $\epsilon,\pi,\lambda, \beta,\alpha,\rho,\sigma,\kappa$,
respectively, from (\ref{RI6}), (\ref{RI9}), (\ref{RI10}),
(\ref{RI12}), (\ref{RI15}), (\ref{RI14}), (\ref{RI13}) and
(\ref{RI3});
\item for $F$ from
(\ref{PrjB3});
\item
for $\phi_0$ and $\phi_1$ directly from (\ref{MR3}) and (\ref{MR4})
respectively; for $\phi_2$, from (\ref{MR2}) we deduce at $\scri^-$
\[D^2\Delta^{k+1}\phi_2= 0,\]
and then periodicity implies
\[D\Delta^{k+1}\phi_2= 0;\]
\item
for $\psi_{i}, i=0,1,2,3$ from (\ref{BBB5})-(\ref{BBB8}); for
$\psi_{4}$, under the inductive hypothesis, we deduce at $\scri^-$
\[D^2\Delta^{k+1}\psi_{4}= 0\]
from (\ref{BBB4}) and then periodicity implies
\[D\Delta^{k+1}\psi_{4}=0;\]
\item
for $\Phi_{00},\Phi_{01},\Phi_{02}, \Phi_{12}$ from (\ref{BIA2}),
(\ref{BIB2}),
 (\ref{BIA4}) and (\ref{BIB4}) respectively, all with $\Psi_{n}=\Omega\psi_{n}$;
then for $\Lambda$, $\Phi_{11}$ and $\Phi_{22}$ we use (\ref{RI8}),
(\ref{BIC3}) and (\ref{RI11}).
\end{itemize}
~\\
This completes the inductive step for the first set of quantities
$Q$. For $Q=S_i$ we use $D\Delta^{k+1}$ applied to
(\ref{PrjA8})-(\ref{PrjA10}).
Thus $r$-derivatives of all orders of the quantities in
(\ref{cond2}), which includes the metric functions $H,C^I$ and
$P^I$, are independent of $v$. Now analyticity in $r$ forces these
functions to be independent of $v$. Therefore, by (\ref{tet2}), the
metric components are all independent of $v$ and so
$K:=\partial/\partial v$ is a Killing vector of the unphysical
metric. However, for any $j$,
\[\partial_v\Delta^j\Omega=\partial_v\Delta^{j-1}S_2,\]
at $\scri^-$ and the r.h.s. vanishes for all $j$. Thus, by
analyticity in $r$, $\Omega$ is also independent of $v$ and so $K$
is a Killing vector of the physical metric too.
The norm-squared of the Killing vector is
\[g(K,K)=2(H-\omega\bar\omega).\]
This is $O(r^2)$ at $\scri^-$ but there
\[\Delta^2g(K,K)=2\Delta^2H=-2\Delta(\eps+\bar\eps)=2\]
so that $K$ is null at $\scri^-$ but time-like just inside: the
metric is stationary. \qed

This completes the proof of the theorem. Note that we have shown that, under the assumption of periodicity of the space-time and
the electromagnetic field, both fields are necessarily time-independent(in fact we assume slightly less, namely that the space-time is periodic and that $\phi_2$ is periodic).  A
slightly different question is whether a stationary
asymptotically-flat gravitational field might be produced by an
electromagnetic field which is not itself stationary. The content of
the Corollary 1.2 is that the answer is no.

~\\

\noindent {\bf Proof of Corollary 1.2.}\;\;\;Starting from the
assumption that the metric admits $\partial_v$ as a Killing vector,
we want to show that this is also a symmetry of the Maxwell field.
We have
\begin{eqnarray}
\tilde{\Phi}_{ij} &=  \Omega^2\,\phi_i\,\bar{\phi}_j,
\end{eqnarray}
\noindent and $\partial_v \tilde{\Phi}_{ij}=0$ so that, for some
$\chi$ possibly depending on $v$, we
have
\begin{eqnarray}
\phi_i &=  e^{i\,\chi}\,\fii_i,
\end{eqnarray}

\noindent where $\fii_i$ is $v-$independent. From the Maxwell
equation (\ref{MR1}), with $\phi_0=0$ on $\scri^-$, we find $\phi_1
D \chi=0$ on $\scri^-$ so that $D\chi=0$ unless $\phi_1=0$ there. If
$\phi_1=0$ there, (\ref{MR2}) gives $D\chi=0$ unless $\phi_2=0$, so
we can conclude that $D\phi_i=0$ on $\scri^-$. Now we set up an
induction to show that $D\Delta^n\phi_i=0$ on $\scri^-$ for all
$n\in{\mathbb N}$ and $i=0,1,2$. The inductive hypothesis will be
\begin{eqnarray}
(\forall k \leq n)(\forall i\in\{0,1,2\})(D\Delta^k\phi_i=0\;\;{\rm
on}\;\;\scri^-).
\end{eqnarray}
\noindent Then by $D\Delta^{n}$ on (\ref{MR3}) and (\ref{MR4}) we
obtain this for $k=n+1$ and $i=0,1$. For $i=2$, $D\Delta^{n+1}$ on
(\ref{MR2}) gives
\begin{eqnarray}
D^2\Delta^{n+1}\phi_2 &=  0\;\;\;\;{\rm on}\;\;\scri^-,
\end{eqnarray}
\noindent which integrates to give $\Delta^{n+1}\phi_2=a v + b$.
This would contribute a $v-$dependent term to $\tilde{\Phi}_{22}$ at
${\cal O}(\Omega^{2n+4})$, a contradiction unless $a=0$. Then
$D\Delta^{n+1}\phi_2=0$ on $\scri^-$, which completes the induction.

By assumption, the Maxwell field is analytic and so has a convergent
power series in $r$ near to $\scri^-$ and we have shown that all
coefficients are $v-$independent. Since the spinor dyad is
Lie-dragged by the Killing vector, this proves that the Maxwell
field is too: in this situation the Maxwell field inherits the
symmetry.
\qed
~\\
\ack{P. T. gratefully acknowledges hospitality and financial
support from the Mittag-Leffler Institute, Djursholm, Sweden and the
Charles University, Prague, and useful discussions with Gary Gibbons
and John Stewart. J.B. also acknowledges the discussions with Gary
Gibbons and the partial support from the Grant GA CR 202/09/00772 of
the Czech Republic, of Grant No LC06014 and MSM0021620860 of the
Ministry of Education. The work of M. S. was supported by the Grant
GAUK no. 22708 of the Charles University, Czech Republic.}
\newpage
\section*{Appendix A: The Newman-Penrose formalism and conformal transformations in Einstein-Maxwell space-times}\label{APP-A}
 \setcounter{equation}{0}
\renewcommand{\theequation}{A\arabic{equation}}
\subsection*{A1. Gravitational field}
In the NP formalism, the spin coefficients are the Ricci rotation
coefficients with respect to a null tetrad $\{l,n,m\}$ with the
corresponding spin basis ${o_A,\iota_A}$; they encode the
connection. The twelve independent complex coefficients are defined
by (see e.g. \cite{JS}, \cite{NP} for details)

\begin{eqnarray}
\eqalign{
\fl \kappa = m^a D l_a = o^A D o_A, &
\fl \hspace{3cm} \tau = m^a \Delta l_a = o^A \Delta o_A,  \\
\fl
\sigma = m^a \delta l_a = o^A \delta o_A, &
\fl \hspace{3cm}  \rho = m^a \bar{\delta} l_a = o^A \bar{\delta} o_A, \\
~\\
\fl \varepsilon = \frac{1}{2}\left[
n^a D l_a - \bar{m}^a D m_a\right] = \iota^A D o_A, &\fl \hspace{3cm}  \beta = \frac{1}{2}\left[
n^a \delta l_a - \bar{m}^a \delta m_a\right] = \iota^A \delta o_A,  \\
\fl \gamma = \frac{1}{2}\left[
n^a \Delta l_a - \bar{m}^a \Delta m_a\right] = \iota^A \Delta o_A, &\fl \hspace{3cm}  \alpha = \frac{1}{2}\left[
n^a \bar{\delta} l_a - \bar{m}^a \bar{\delta} m_a\right] = \iota^A \bar{\delta} o_A, \\ \\
\fl\pi = n^a D \bar{m}_a = \iota^A D \iota_A, &
\fl \hspace{3cm} \nu = n^a \Delta \bar{m}_a = \iota^A \Delta \iota_A,
 \\
\fl \lambda = n^a \bar{\delta} \bar{m}_a = \iota^A \bar{\delta} \iota_A,
&
\fl \hspace{3cm} \mu = n^a \delta \bar{m}_a = \iota^A \delta \iota_A  ,
}
\label{SpinCoeff}
\end{eqnarray}
~\\

\noindent where $D=\nabla_l,\;\Delta=\nabla_n,\;\delta=\nabla_m$. Acting on a scalar, the operators $D,\Delta, \delta$ obey the  commutation relations:

\begin{eqnarray}
\eqalign{
D\delta \;-\; \delta D  &=  (\bar{\pi}-\bar{\alpha}-\beta)D - \kappa
\Delta + (\bar{\rho}-\bar{\varepsilon}+\varepsilon)\delta + \sigma
\bar{\delta},\\ \Delta D  \;-\; D \Delta  &=
(\gamma+\bar{\gamma})D +(\varepsilon + \bar{\varepsilon})\Delta -
(\bar{\tau}+\pi)\delta -
(\tau+\bar{\pi})\bar{\delta},\label{commutators} \\ \Delta \delta
\;-\; \delta \Delta &=  \bar{\nu}D + (\bar{\alpha}+\beta -
\tau)\Delta + (\gamma-\bar{\gamma}-\mu)\delta - \bar{\lambda}
\bar{\delta},\\ \delta\bar{\delta}\; -\; \bar{\delta} \delta
&=  (\mu-\bar{\mu})D + (\rho-\bar{\rho})\Delta +
(\bar{\alpha}-\beta)\bar{\delta} -
(\alpha-\bar{\beta})\delta.
}
\end{eqnarray}

\noindent The Riemann tensor can be decomposed as follows:
\nopagebreak
\begin{eqnarray}
\eqalign{
R_{abcd} &=  C_{abcd} \label{Riemann}\\
&+\;\Phi_{AB C^\prime D^\prime}\;\epsilon_{A^\prime B^\prime}\;\epsilon_{CD}\;\;+\;\;\bar{\Phi}_{A^\prime B^\prime C D}\;\epsilon_{AB}\;\epsilon_{C^\prime D^\prime} \\
&+\;\Lambda\;\left(\epsilon_{AC}\;\epsilon_{BD}\;+\;\epsilon_{BC}\;\epsilon_{AD}\right)\;\epsilon_{A^\prime B^\prime}\;\epsilon_{C^\prime D^\prime} \\
&+\;\Lambda\;\left(\epsilon_{A^\prime C^\prime}\;\epsilon_{B^\prime D^\prime}\;+\;\epsilon_{B^\prime C^\prime}\;\epsilon_{A^\prime D^\prime}\right)\;\epsilon_{AB}\;\epsilon_{CD}.}
\end{eqnarray}
\noindent The first part is the Weyl tensor whose spinor equivalent is the totally symmetric Weyl spinor $\Psi_{ABCD}$:
\nopagebreak
\begin{eqnarray}
C_{abcd} &=  \Psi_{ABCD}\;\epsilon_{A^\prime B^\prime}\;\epsilon_{C^\prime D^\prime}\;\;+\;\;\bar{\Psi}_{A^\prime B^\prime C^\prime D^\prime}\;\epsilon_{AB}\;\epsilon_{CD}.
\end{eqnarray}
\noindent The scalar $\Lambda$ is related to the scalar curvature $R$ by
\begin{eqnarray}
\Lambda &=  \frac{1}{24}\;R.
\end{eqnarray}
\noindent The symmetric Ricci spinor $\Phi_{AB C^\prime D^\prime}$ is equivalent
to the trace-free part of the Ricci tensor:
\begin{eqnarray}
R_{ab} &=  - \;2\;\Phi_{A B A^\prime B^\prime}\;\;+\;\;6\;\Lambda\;\epsilon_{AB}\;\epsilon_{A^\prime B^\prime}.\label{CervenyHranol}
\end{eqnarray}
\noindent The spinor equivalent of the Einstein tensor is
\begin{eqnarray}
G_{ab} &=  - \;2\;\Phi_{A B A^\prime B^\prime}\;\;-\;\;6\;\Lambda\;\epsilon_{AB}\;\epsilon_{A^\prime B^\prime},\label{EinstSpinor}
\end{eqnarray}
\noindent and the spinor equivalent of Einstein's equations is
\begin{eqnarray}
\Phi_{AB A^\prime B^\prime} &=  -\;3\,\Lambda\,\epsilon_{AB}\,\epsilon_{A^\prime B^\prime}\;+\;4\,\pi\,T_{A B A^\prime B^\prime}.\label{EinstEQS}
\end{eqnarray}
\noindent Taking the symmetric part or contracting them with $\epsilon^{AB}\epsilon^{A^\prime B^\prime}$, respectively, we obtain two equations, equivalent to (\ref{EinstEQS}):
\begin{eqnarray}
\eqalign{
\Phi_{A B A^\prime B^\prime} \;=\; 4\,\pi\,T_{(A B) (A^\prime B^\prime)},\label{EinstEQSSpinor}\\
3\;\Lambda\;=\; \pi\,T_{A\;\;\;A^\prime}^{\;\;A\;\;\;A^\prime}.}
\end{eqnarray}
\noindent The five complex components of the Weyl spinor are
\begin{eqnarray}
\eqalign{
\Psi_0 = C_{abcd}l^a m^b l^c m^d &=\Psi_{ABCD}\,o^A o^B o^C o^D,\label{psicomps2}\\
\Psi_1 = C_{abcd}l^a n^b l^c m^d &=\Psi_{ABCD}\,o^A o^B o^C\iota^D,\\
\Psi_2 = C_{abcd}l^a m^b\bar{m}^c n^d &=\Psi_{ABCD}\,o^A o^B \iota^C \iota^D,\\
\Psi_3 = C_{abcd}l^a n^b\bar{m}^c n^d &=\Psi_{ABCD}\,o^A \iota^B \iota^C \iota^D,\\
\Psi_4 = C_{abcd}\bar{m}^a n^b\bar{m}^cn^d &=\Psi_{ABCD}\,\iota^A \iota^B \iota^C \iota^D.\\
}
\end{eqnarray}
\noindent The traceless Ricci tensor has the following components (3
real and 3 complex): \nopagebreak
\begin{eqnarray}\eqalign{
\Phi_{00} = - \frac{1}{2} R_{ab} l^a l^b  &=  \Phi_{ABA^\prime B^\prime} o^A o^B \bar{o}^{A^\prime} \bar{o}^{B^\prime},
\label{RicciComps}\\
\Phi_{01} = - \frac{1}{2} R_{ab} l^a m^b  &=  \Phi_{ABA^\prime B^\prime} o^A o^B \bar{o}^{A^\prime}\bar{\iota}^{B^\prime},\\
\Phi_{02} = - \frac{1}{2} R_{ab} m^a m^b  &=  \Phi_{ABA^\prime B^\prime} o^A o^B \bar{\iota}^{A^\prime}\bar{\iota}^{B^\prime},\\
\Phi_{11} = - \frac{1}{4} R_{ab} \left(l^a n^b + m^a \bar{m}^b\right)  &=  \Phi_{ABA^\prime B^\prime} o^A \iota^B\bar{o}^{A^\prime} \bar{\iota}^{B^\prime},\\
\Phi_{12} = - \frac{1}{2} R_{ab} n^a m^b  &=  \Phi_{ABA^\prime B^\prime} o^A \iota^B \bar{\iota}^{A^\prime}\bar{\iota}^{B^\prime},\\
\Phi_{22} = - \frac{1}{2} R_{ab} n^a n^b  &=  \Phi_{ABA^\prime B^\prime} \iota^A \iota^B \bar{\iota}^{A^\prime}\bar{\iota}^{B^\prime}.
}
\end{eqnarray}
\noindent The three remaining components can be obtained via the
condition $\Phi_{ij}=\bar{\Phi}_{ji}$.
Under the conformal rescaling $g_{ab}=\Omega^2\tilde{g}_{ab}$ the
covariant derivative acting on a 2-component spinor transforms as
\begin{eqnarray}
\tilde{\nabla}_{AA^\prime}\xi_B &=
\nabla_{AA^\prime}\xi_B\;+\;\Omega^{-1}\;\xi_A\;\nabla_{BA^\prime}\Omega.\label{CovDerTrans}
\end{eqnarray}
\noindent The NP quantities also transform. To find relations
between the physical and unphysical quantities we have to transform
the null tetrad. We wish to keep $n_a=\tilde{n}_a=\partial_av$ so
the correct choice is
\begin{eqnarray}
\hspace{-1cm}\eqalign{\label{BasisCT}\begin{array}{llll}
o^A\;=\;\tilde{o}^A\;,&\iota^A\;=
\;\Omega^{-1}\;\tilde{\iota}^A\;,&o_A\;=\;\Omega\;\tilde{o}_A\;,&\iota_A\;=\;\tilde{\iota}_A\;,
\\
l^a\;=\;\tilde{l}^a\;,&n^a\;=\;\Omega^{-2}\tilde{n}^a,&m^a\;=\;\Omega^{-1}\,\tilde{m}^a\;,&\bar{m}^a\;=\;\Omega^{-1}\,\bar{\tilde{m}}^a\;
\\
l_a\;=\;\Omega^2
\tilde{l}_a\;,&n_a\;=\;\tilde{n}_a\;,&m_a\;=\;\Omega\,\tilde{m}_a\;,&\bar{m}_a\;=\;\Omega\,\bar{\tilde{m}}_a,
\end{array}}
\end{eqnarray}
from which the transformation of the spin-coefficients can be
found.

The geometrical meaning of the spin coefficients depends
on the choice of the null tetrad. With our choices, the vector $l$
is pointing into $\scri^+$, while $n$ is tangent to $\scri^+$. On
$\scri^-$ the role of these vectors is interchanged, $n$ is pointing
from $\scri^-$ and $l$ is tangent to it. To convert quantities from
$\scri^+$ to $\scri^-$  we have only to switch the spinors $o^A$ and
$\iota^A$ (and adjust some signs). The
correspondence between the quantities on $\scri^+$ and $\scri^-$ is
given in the following table:
\begin{eqnarray}
\begin{array}{ll}
  \kappa \;\leftrightarrow\;\nu, & \tau \;\leftrightarrow\;\pi, \\
  \sigma \;\leftrightarrow\;\lambda, & \rho \;\leftrightarrow\;\mu, \\
  \varepsilon \;\leftrightarrow\;\gamma, & \alpha \;\leftrightarrow\;\beta, \\
  \Psi_{n}\;\leftrightarrow\;\Psi_{4-n}, & \Phi_{ij}\;\leftrightarrow\;\Phi_{(2-i)(2-j)}. \\
\end{array}\label{Kudryjas}
\end{eqnarray}
\noindent The scalar curvature and the Ricci spinor transform according to the
formulas
\nopagebreak
\begin{eqnarray}
\eqalign{
\tilde{R} = \Omega^2 R-6\Omega \Box \Omega+12
g^{ab}\left(\nabla_a\Omega\right)\left(\nabla_b\Omega\right),\label{ConfTrans}\\
\tilde{\Phi}_{A B A^\prime B^\prime}=\Phi_{A B A^\prime
B^\prime}+\Omega^{-1}\nabla_{A
(A^\prime}\nabla_{B^\prime) B}\Omega,}
\end{eqnarray}
\noindent the NP components of the Weyl spinor  as
\nopagebreak
\begin{eqnarray}
\tilde{\Psi}_n\;\;=\;\;\Omega^{n}\;\Psi_n\;. \label{PhiTrans}
\end{eqnarray}
\noindent The Weyl spinor is conformally invariant with weight zero:
\begin{eqnarray}
\Psi_{ABCD}=\tilde{\Psi}_{ABCD}\;.\label{WeylConfTrans}
\end{eqnarray}
\noindent Because the physical Weyl spinor vanishes on $\scri^-$, so
does the unphysical one, and assuming smoothness is therefore ${\cal
O}(\Omega)$. Then we get
\begin{eqnarray}
\tilde{\Psi}_n&\in&{\cal O}(\Omega^{n+1})\;.\label{PsiOrd}
\end{eqnarray}
\noindent The Ricci identities can be written in the spinor form as follows:
\begin{eqnarray}
\eqalign{
\nabla_{A^\prime (A}\nabla_{B)}^{A^\prime}\xi_C = \Psi_{ABCD}\xi^D-2\Lambda\xi_{(A}\epsilon_{B)C},\label{RicciIDSSp}\\
\nabla_{A (A^\prime}\nabla_{B^\prime)}^A\xi_C = \Phi_{C D A^\prime B^\prime}\xi^D.}
\end{eqnarray}

\noindent Substituting the basis spinors $o_A$ and $\iota_A$ for $\xi_A$ and projecting the last equations onto the spin basis we obtain the Ricci identities in the NP-formalism:

\NUMPARTS{A}
\begin{eqnarray}
\fl  D \rho - \bar{\delta} \kappa =\rho ^2+\left(\epsilon +\bar{\epsilon
   }\right) \rho -\kappa  \left(3 \alpha +\bar{\beta }-\pi \right)-\tau
   \bar{\kappa }+\sigma  \bar{\sigma }+\Phi_{00},\label{RI1}\\
\fl  D\sigma-\delta\kappa = (\rho+\bar{\rho}+3\eps-\bar{\eps})\sigma - (\tau-\bar{\pi}+\bar{\alpha}+3\beta)\kappa+\Psi_0,\label{RI2}\\
\fl  D\tau-\Delta\kappa = \rho(\tau+\bar{\pi})+\sigma(\bar{\tau}+\pi)+(\eps-\bar{\eps})\tau -(3\gamma+\bar{\gamma})\kappa+\Psi_1+\Phi_{01},\label{RI3}\\
\fl  D\alpha-\bar{\delta}\eps = (\rho +\bar{\eps}-2\eps)\alpha+\beta\bar{\sigma}-\bar{\beta}\eps - \kappa \lambda - \bar{\kappa}\gamma + (\eps+\rho)\pi + \Phi_{10},\label{RI4}\\
\fl  D\beta-\delta\eps = (\alpha+\pi)\sigma + (\bar{\rho}-\bar{\eps})\beta-(\mu+\gamma)\kappa-(\bar{\alpha}-\bar{\pi})\eps + \Psi_1,\label{RI5}\\
\fl  D\gamma-\Delta\eps = (\tau+\bar{\pi})\alpha + (\bar{\tau}+\pi)\beta - (\eps+\bar{\eps})\gamma - (\gamma + \bar{\gamma})\eps + \tau \pi - \nu \kappa\nonumber\\
 +\; \Psi_2 - \Lambda + \Phi_{11},\label{RI6}\\
\fl  D\lambda-\bar{\delta}\pi = (\rho - 3\eps+\bar{\eps})\lambda + \bar{\sigma}\mu + (\pi+\alpha-\bar{\beta})\pi - \nu\bar{\kappa}+\Phi_{20},\label{RI7}\\
\fl  D\mu-\delta\pi = (\bar{\rho}-\eps-\bar{\eps})\mu+\sigma\lambda+ (\bar{\pi}-\bar{\alpha}+\beta)\pi - \nu \kappa + \Psi_2 + 2 \Lambda,\label{RI8}\\
\fl  D\nu-\Delta\pi = (\pi+\bar{\tau})\mu+(\bar{\pi}+\tau)\lambda+(\gamma-\bar{\gamma})\pi - (3\eps+\bar{\eps})\nu+\Psi_3+\Phi_{21},\label{RI9}\\
\fl  \Delta\lambda-\bar{\delta}\nu = -(\mu+\bar{\mu}+3\gamma-\bar{\gamma})\lambda+(3\alpha+\bar{\beta}+\pi-\bar{\tau})\nu-\Psi_4,\label{RI10}\\
\fl  \Delta\mu-\delta\nu = -(\mu+\gamma+\bar{\gamma})\mu-\lambda\bar{\lambda}+\bar{\nu}\pi+(\bar{\alpha}+3\beta-\tau)\nu-\Phi_{22},\label{RI11}\\
\fl  \Delta\beta-\delta\gamma = (\bar{\alpha}+\beta-\tau)\gamma - \mu \tau + \sigma \nu + \eps \bar{\nu} + (\gamma-\bar{\gamma}-\mu)\beta - \alpha\bar{\lambda}-\Phi_{12},\label{RI12}\\
\fl  \Delta\sigma-\delta\tau = -(\mu-3\gamma+\bar{\gamma})\sigma - \bar{\lambda}\rho - (\tau + \beta - \bar{\alpha})\tau + \kappa \bar{\nu}-\Phi_{02},\label{RI13}\\
\fl  \Delta\rho-\bar{\delta}\tau = (\gamma+\bar{\gamma}-\bar{\mu})\rho - \sigma \lambda + (\bar{\beta}-\alpha-\bar{\tau})\tau + \nu \kappa - \Psi_2 - 2 \Lambda,\label{RI14}\\
\fl  \Delta\alpha-\bar{\delta}\gamma = (\rho+\eps)\nu - (\tau+\beta)\lambda + (\bar{\gamma}-\bar{\mu})\alpha + (\bar{\beta}-\bar{\tau})\gamma - \Psi_3,\label{RI15}\\
\fl  \delta\rho-\bar{\delta}\sigma = (\bar{\alpha}+\beta)\rho - (3\alpha-\bar{\beta})\sigma+(\rho-\bar{\rho})\tau+(\mu-\bar{\mu})\kappa -\Psi_1 + \Phi_{01},\label{RI16}\\
\fl  \delta\alpha-\bar{\delta}\beta = \mu\rho-\lambda\sigma + \alpha\bar{\alpha}+\beta\bar{\beta}-2\alpha\beta + (\rho-\bar{\rho})\gamma + (\mu-\bar{\mu})\eps  -  \Psi_2 + \Lambda + \Phi_{11},\label{RI17}\\
\fl  \delta\lambda-\bar{\delta}\mu = (\rho-\bar{\rho})\nu + (\mu-\bar{\mu})\pi + (\alpha+\bar{\beta})\mu+(\bar{\alpha}-3\beta)\lambda-\Psi_3 + \Phi_{21}.\label{RI18}
\end{eqnarray}
\ENDNUMPARTS{A}

\noindent The spinor form of the Bianchi identities is

\begin{eqnarray}
\nabla^D_{B^\prime} \Psi_{ABCD}&= \nabla_A^{A^\prime}\Phi_{BC A^\prime B^\prime}\;+\;\epsilon_{C(A}\,\nabla_{B)B^\prime}\Lambda\;-\;\frac{3}{2}\,\epsilon_{AB}\,\nabla_{CB^\prime}\Lambda.\nonumber\\
\label{SpinorBianchi}
\end{eqnarray}

\noindent Projecting these equations onto the spin basis leads to the Bianchi identities in the NP formalism:

\nopagebreak
\NUMPARTS{A}
\begin{eqnarray}\eqalign{{\fl}
D\Psi_1-\bar{\delta}\Psi_0-D\Phi_{01}+\delta\Phi_{00} =(\pi - 4 \alpha) \Psi_0+2(2\rho+\varepsilon)\Psi_1-3\kappa\Psi_2+2\kappa\Phi_{11}\\
\
 -\;(\bar{\pi}-2\bar{\alpha}-2\beta)\Phi_{00}-2\sigma\Phi_{10}-
2(\bar{\rho}+\varepsilon)\Phi_{01}+\bar{\kappa}\Phi_{02},\label{BIA1}}\\
\eqalign{
\fl D\Psi_2-\bar{\delta}\Psi_1+\Delta\Phi_{00}-\bar{\delta}\Phi_{01}+2D\Lambda=-\lambda\Psi_0 + 2 (\pi-\alpha)\Psi_1+3\rho \Psi_2-2\kappa\Psi_3\\
 +2\rho\Phi_{11}+\bar{\sigma}\Phi_{02}+\;(2\gamma+2\bar{\gamma}-\bar{\mu})\Phi_{00}-2(\alpha+\bar{\tau})\Phi_{01}-2\tau\Phi_{10},\label{BIA2}}\\
\eqalign{
\fl D\Psi_3-\bar{\delta}\Psi_2-D\Phi_{21}+\delta\Phi_{20}-2\bar{\delta}\Lambda = -2\lambda \Psi_1+3\pi\Psi_2 + 2 (\rho-\varepsilon)\Psi_3-\kappa\Psi_4\\
+2\mu\Phi_{10}-\;2\pi\Phi_{11}-(2\beta+\bar{\pi}-2\bar{\alpha})\Phi_{20}-2(\bar{\rho}-\varepsilon)\Phi_{21}+\bar{\kappa}\Phi_{22},\label{BIA3}}\\
\eqalign{
\fl D\Psi_4-\bar{\delta}\Psi_3+\Delta\Phi_{20}-\bar{\delta}\Phi_{21} =-3\lambda\Psi_2 +2(\alpha+2\pi)\Psi_3+(\rho-4\varepsilon)\Psi_4+2\nu\Phi_{10}\\
-2\lambda\Phi_{11}-\;(2\gamma-2\bar{\gamma}+\bar{\mu})\Phi_{20}-2(\bar{\tau}-\alpha)\Phi_{21}+\bar{\sigma}\Phi_{22},\label{BIA4}}
\end{eqnarray}
\ENDNUMPARTS{A}

\NUMPARTS{A}
\begin{eqnarray}
\eqalign{
\fl \Delta\Psi_0-\delta\Psi_1+D\Phi_{02}-\delta\Phi_{01}=(4\gamma-\mu)\Psi_0-2(2\tau+\beta)\Psi_1+3\sigma\Psi_2\\
+(\bar{\rho}+2\varepsilon-2\bar{\varepsilon})\Phi_{02}+\;2\sigma\Phi_{11}-2\kappa\Phi_{12}-\bar{\lambda}\Phi_{00}+2(\bar{\pi}-\beta)\Phi_{01},\label{BIB1}}\\
\eqalign{
\fl \Delta\Psi_1-\delta\Psi_2-\Delta\Phi_{01}+\bar{\delta}\Phi_{02}-2\delta\Lambda =\nu\Psi_0+2(\gamma-\mu)\Psi_1-3\tau\Psi_2+2\sigma\Psi_3\\
-\bar{\nu}\Phi_{00}+\;2(\bar{\mu}-\gamma)\Phi_{01}+(2\alpha+\bar{\tau}-2\bar{\beta})\Phi_{02}+2\tau\Phi_{11}-2\rho\Phi_{12},\label{BIB2}}\\
\eqalign{
\fl\Delta\Psi_2-\delta\Psi_3+D\Phi_{22}-\delta\Phi_{21}+2\Delta\Lambda=2\nu\Psi_1-3\mu\Psi_2+2(\beta-\tau)\Psi_3+\sigma\Psi_4\\
-2\mu\Phi_{11}-\bar{\lambda}\Phi_{20}+\;2\pi\Phi_{12}+2(\beta+\bar{\pi})\Phi_{21}+(\bar{\rho}-2\varepsilon-2\bar{\varepsilon})\Phi_{22},\label{BIB3}}\\
\eqalign{
\fl\Delta\Psi_3-\delta\Psi_4-\Delta\Phi_{21}+\bar{\delta}\Phi_{22}=3\nu\Psi_2-2(\gamma+2\mu)\Psi_3+(4\beta-\tau)\Psi_4-2\nu\Phi_{11}\\
-\bar{\nu}\Phi_{20}+\;2\lambda\Phi_{12}+2(\gamma+\bar{\mu})\Phi_{21}+(\bar{\tau}-2\bar{\beta}-2\alpha)\Phi_{22},\label{BIB4}}
\end{eqnarray}
\ENDNUMPARTS{A}

\NUMPARTS{A}
\begin{eqnarray}
\fl D\Phi_{11}-\delta\Phi_{10}+\Delta\Phi_{00}-\bar{\delta}\Phi_{01}+3D\Lambda = (2\gamma+2\bar{\gamma}-\mu-\bar{\mu})\Phi_{00}+(\pi-2\alpha-2\bar{\tau})\Phi_{01}\nonumber\\
\fl \hspace{0.5cm} +\;(\bar{\pi}-2\bar{\alpha}-2\tau)\Phi_{10}+2(\rho+\bar{\rho})\Phi_{11}+\bar{\sigma}\Phi_{02}
+\sigma\Phi_{20}-\bar{\kappa}\Phi_{12}-\kappa\Phi_{21},\label{BIC1}\\~\nonumber\\
\eqalign{
\fl D\Phi_{12}-\delta\Phi_{11}+\Delta\Phi_{01}-\bar{\delta}\Phi_{02}+3\delta\Lambda = (2\gamma-\mu-2\bar{\mu})\Phi_{01}+\bar{\nu}\Phi_{00}-\bar{\lambda}\Phi_{10}\\
\fl \hspace{0.5cm}+\;2(\bar{\pi}-\tau)\Phi_{11}+(\pi+2\bar{\beta}-2\alpha-\bar{\tau})\Phi_{02}
+(2\rho+\bar{\rho}-2\bar{\varepsilon})\Phi_{12}+\sigma\Phi_{21}-\kappa\Phi_{22},\label{BIC2}}\\~\nonumber\\
\fl D\Phi_{22}-\delta\Phi_{21}+\Delta\Phi_{11}-\bar{\delta}\Phi_{12}+3\Delta\Lambda = \nu \Phi_{01}+\bar{\nu}\Phi_{10}-2(\mu+\bar{\mu})\Phi_{11}-\lambda\Phi_{02}-\bar{\lambda}\Phi_{20}\nonumber\\
\fl \hspace{0.5cm}+\;(2\pi-\bar{\tau}+2\bar{\beta})\Phi_{12}+(2\beta-\tau+2\bar{\pi})\Phi_{21}
+\;(\rho+\bar{\rho}-2\varepsilon-2\bar{\varepsilon})\Phi_{22}.\label{BIC3}
\end{eqnarray}
\ENDNUMPARTS{A}

\subsection*{A2. Electromagnetic field}

For the description of an electromagnetic field we use the electromagnetic field
tensor $F_{ab}$ and its spinor equivalent $\phi_{AB}$:

\begin{eqnarray}
F_{ab} &=  \phi_{AB}\;\epsilon_{A^\prime B^\prime}\;\;+\;\;\bar{\phi}_{A^\prime B^\prime}\;\epsilon_{AB}.
\end{eqnarray}

\noindent The NP components of the Maxwell spinor are defined by

\nopagebreak

\begin{eqnarray}
\eqalign{
\phi_0 &=  F_{ab}\;l^a\;m^b\;\;=\;\;\phi_{AB}\;o^A\;o^B  \\
\phi_1 &=  \frac{1}{2}\;F_{ab}\;\left[l^a n^b \;-\; m^a \bar{m}^b\right]\;\;=\;\;\phi_{AB}\;o^A\;\iota^B \\
\phi_2 &=  F_{ab}\;\bar{m}^a\;n^b\;\;= \;\;\phi_{AB}\;\iota^A\;\iota^B
\label{PhiComps}}
\end{eqnarray}
\noindent The conformal transformation of these quantities is given by
\begin{eqnarray}
\tilde{\phi}_{AB}\;\;=\;\;\Omega\;\phi_{AB},&\;\;\tilde{\phi}_i\;\;=\;\;\Omega^{i+1}\;\phi_i\;.\label{phitrans}
\end{eqnarray}
\noindent Maxwell's equations without sources are equivalent to the
(conformally-invariant) spin-1 zero-rest-mass equation
\begin{eqnarray}
\nabla^{A}_{\; A^\prime}\,\phi_{AB} &=  0.
\end{eqnarray}

\noindent Projecting this onto the spin basis we obtain Maxwell's
equations in the NP formalism:
\NUMPARTS{A}
\begin{eqnarray}
D\phi_1\;-\;\bar{\delta}\phi_0 &=  (\pi - 2\alpha)\phi_0\;+\;2\rho\phi_1\;-\;\kappa \phi_2, \label{MR1}\\
D\phi_2\;-\;\bar{\delta}\phi_1 &=  -\lambda \phi_0 \;+\; 2\pi\phi_1\;+\;(\rho-2\varepsilon)\phi_2,\label{MR2}\\
\Delta\phi_0\;-\;\delta\phi_1 &=  (2\gamma -\mu)\phi_0\;-\;2\tau\phi_1\;+\;\sigma\phi_2,\label{MR3}\\
\Delta\phi_1\;-\;\delta\phi_2 &=  \nu \phi_0\;-\;2\mu\phi_1\;+\;(2\beta-\tau)\phi_2.\label{MR4}
\end{eqnarray}
\ENDNUMPARTS{A}

\section*{Appendix B: Conformal field equations}
\setcounter{equation}{0}
\renewcommand{\theequation}{B\arabic{equation}}

\subsection*{B1. Einstein-Maxwell fields}

\noindent The projections of the equation (\ref{derE}),
\begin{eqnarray}
\fl \nabla_{AA^\prime}\nabla_{BB^\prime}\Omega&= \Omega^3\,\phi_{AB}\,
\bar{\phi}_{A^\prime B^\prime}-\Omega\,\Phi_{AB A^\prime B^\prime}
+(F+\Omega\Lambda)\epsilon_{A^\prime B^\prime}\epsilon_{AB},
\end{eqnarray}

\noindent onto the null tetrad imply the following system of equations:

\NUMPARTS{B}
\begin{eqnarray}
\fl DS_0\;+\;(\eps+\bar{\eps})S_0\;+\;\bar{\kappa}S_1\;+\;\kappa\bar{S}_1 &=  \Omega^3 \phi_0 \bar{\phi}_0\;-\;\Omega\Phi_{00},\label{PrjA1}\\
\fl DS_1\;-\;\bar{\pi}S_0\;+\;(\bar{\eps}-\eps)S_1 \;+\;\kappa S_2 &=  \Omega^3 \phi_0 \bar{\phi}_1\;-\;\Omega\Phi_{01},\label{PrjA2}\\
\fl \delta S_0 - (\bar{\alpha}+\beta)S_0\;+\;\bar{\rho}S_1\;+\;\sigma \bar{S}_1 &=  \Omega^3 \phi_0 \bar{\phi}_1\;-\;\Omega\Phi_{01},\label{PrjA3}\\
\fl \delta S_1\;-\;\bar{\lambda}S_0\;+\;(\bar{\alpha}-\beta)S_1\;+\;\sigma S_2 &=  \Omega^3 \phi_0 \bar{\phi}_2\;-\;\Omega\Phi_{02},\label{PrjA4}\\
\fl DS_2\;-\;F\;-\;\Omega\,\Lambda\;-\;\pi S_1\;-\;\bar{\pi}\bar{S}_1\;+\;(\eps+\bar{\eps})S_2 &=  \Omega^3 \phi_1 \bar{\phi}_1\;-\;\Omega\Phi_{11},\label{PrjA5}\\
\fl \delta\bar{S}_1\;+\;F\;+\Omega\,\Lambda\;-\;\mu S_0\;+\;(\beta-\bar{\alpha})\bar{S}_1\;+\;\bar{\rho}S_2 &=  \Omega^3 \phi_1 \bar{\phi}_1\;-\;\Omega\Phi_{11},\label{PrjA6}\\
\fl \delta S_2\;-\;\mu S_1\;-\;\bar{\lambda}\bar{S}_1\;+\;(\bar{\alpha}+\beta)S_2 &=  \Omega^3 \phi_1 \bar{\phi}_2\;-\;\Omega\Phi_{12},\label{PrjA7}\\
\fl \Delta S_0 \;-\;F\;-\;\Omega\,\Lambda\;-\;(\gamma+\bar{\gamma})S_0\;+\;\bar{\tau}S_1\;+\;\tau \bar{S}_1&=  \Omega^3 \phi_1 \bar{\phi}_1\;-\;\Omega\Phi_{11},\label{PrjA8}\\
\fl \Delta S_1\;-\;\bar{\nu}S_0\;+\;(\bar{\gamma}-\gamma)S_1\;+\;\tau S_2&=  \Omega^3 \phi_1 \bar{\phi}_2\;-\;\Omega\Phi_{12},\label{PrjA9}\\
\fl \Delta S_2\;-\;\nu S_1\;-\;\bar{\nu} \bar{S}_1\;+\;(\gamma+\bar{\gamma})S_2&=  \Omega^3 \phi_2 \bar{\phi}_2\;-\;\Omega\Phi_{22}.\label{PrjA10}
\end{eqnarray}
\ENDNUMPARTS{B}

~\\
\noindent The projections of the equation (\ref{EqForF}),
\nopagebreak
\begin{eqnarray}
 \nabla_{AA^\prime}F = \Omega^2 \phi_A^B \bar{\phi}_{A^\prime}^{B^\prime}
 \nabla_{BB^\prime}\Omega - \Phi_{ABA^\prime B^\prime}\nabla^{BB^\prime}\Omega + \Lambda\,\nabla_{A A^\prime}\Omega,
\end{eqnarray}
\nopagebreak

\noindent give

\NUMPARTS{B}
\begin{eqnarray}
D F &= - S_0 \Phi_{11} + S_1 \Phi_{10}  +  \bar{S}_1 \Phi_{01} - S_2 \Phi_{00}\nonumber\\
 &+ \Omega^2\left[ S_0 \phi_1\,\bar{\phi}_1 - S_1 \phi_1\,\bar{\phi}_0  -  \bar{S}_1 \phi_0\,\bar{\phi}_1 + S_2 \phi_0\,\bar{\phi}_0\right] + \Lambda\,S_0,\label{PrjB1}\\
\delta F &= - S_0 \Phi_{12} + S_1 \Phi_{11}  +  \bar{S}_1 \Phi_{02} - S_2 \Phi_{01}\nonumber\\
 &+ \Omega^2\left[ S_0 \phi_1\,\bar{\phi}_2 - S_1 \phi_1\,\bar{\phi}_1  -  \bar{S}_1 \phi_0\,\bar{\phi}_2 + S_2 \phi_0\,\bar{\phi}_1\right] + \Lambda\,S_1,\label{PrjB2}\\
\Delta F &= - S_0 \Phi_{22} + S_1 \Phi_{21}  +  \bar{S}_1 \Phi_{12} - S_2 \Phi_{11}\nonumber\\
 &+ \Omega^2\left[ S_0 \phi_2\,\bar{\phi}_2 - S_1 \phi_2\,\bar{\phi}_1  -  \bar{S}_1 \phi_1\,\bar{\phi}_2 + S_2 \phi_1\,\bar{\phi}_1\right] + \Lambda\,S_2.\label{PrjB3}
\end{eqnarray}
\ENDNUMPARTS{B}

\noindent The conformal Bianchi identities (\ref{Biankiki}) for the Einstein-Maxwell field projected onto the spin basis imply the following system:\NUMPARTS{B}
\begin{eqnarray}
  \fl D\psi_1-\bar{\delta}\psi_0 = (\pi -4\alpha)\psi_0+2(\eps+2\rho)\psi_1-3\kappa\psi_2-3 S_1 \phi_0\bar{\phi}_0+3S_0 \phi_0\bar{\phi}_1\nonumber\\
+\Omega\,[ 2 \sigma \phi_1\bar{\phi}_0-2\beta\phi_0\bar{\phi}_0+2\eps\phi_0\bar{\phi}_1-2\kappa\phi_1\bar{\phi}_1+ \bar{\phi}_0 \delta\phi_0-\bar{\phi}_1D\phi_0],\label{BBB1}\\
\fl  D\psi_2-\bar{\delta}\psi_1 = -\lambda\psi_0+2(\pi-\alpha)\psi_1+2\rho\psi_2-2\kappa\psi_3-S_2 \phi_0\bar{\phi}_0-2 S_1 \phi_1\bar{\phi}_0\nonumber\\
    +2 S_0 \phi_1\bar{\phi}_1+\bar{S}_1 \phi_0\bar{\phi}_1+\frac{2}{3}\,\Omega\,[\bar{\phi}_0\delta\phi_1-\bar{\phi}_1D\phi_1-(\gamma+\mu)\phi_0\bar{\phi}_0\nonumber\\
   +\tau\phi_1\bar{\phi}_0+(\alpha+\pi)\phi_0\bar{\phi}_1+\sigma\phi_2\bar{\phi}_0-\rho\phi_1\bar{\phi}_1-\kappa\phi_2\bar{\phi}_1]+\frac{1}{3}\,\Omega\,\left[\bar{\phi}_0\Delta\phi_0-\bar{\phi}_1\bar{\delta}\phi_0\right],\label{BBB2}\\
\fl
  D\psi_3-\bar{\delta}\psi_2 = -2\lambda\psi_1+3\pi\psi_2+2(\rho-\eps)\psi_3-\kappa\psi_4-2 S_2 \phi_1\bar{\phi}_0-S_1 \phi_2\bar{\phi}_0+S_0\phi_2\bar{\phi}_1+2\bar{S}_1\phi_1\bar{\phi}_1\nonumber\\
   +\frac{2}{3}\,\Omega\,[-\nu\phi_0\bar{\phi}_0-\mu\phi_1\bar{\phi}_0+\lambda\phi_0\bar{\phi}_1+(\beta+\tau)\phi_2\bar{\phi}_0+\pi\phi_1\bar{\phi}_1-(\eps+\rho)\phi_2\bar{\phi}_1+\bar{\phi}_0\Delta\phi_1-\bar{\phi}_1\bar{\delta}\phi_1]\nonumber\\
   +\frac{1}{3}\,\Omega\,\left[ \bar{\phi}_0 \delta\phi_2-\bar{\phi}_1D\phi_2\right],\label{BBB3}\\
\fl
  D\psi_4-\bar{\delta}\psi_3 = - 3 \lambda \psi_2+2(\alpha+2\pi)\psi_3+(\rho-4\eps)\psi_4-3 S_2 \phi_2\bar{\phi}_0+3 \bar{S}_1 \phi_2\bar{\phi}_1\nonumber\\
   +\Omega\,\left[\bar{\phi}_0\Delta\phi_2-\bar{\phi}_1\bar{\delta}\phi_2-2\nu\phi_1\bar{\phi}_0+2\gamma\phi_2\bar{\phi}_0+2\lambda\phi_1\bar{\phi}_1-2\alpha\phi_2\bar{\phi}_1\right],\label{BBB4}\\
\fl
  \delta\psi_1-\Delta\psi_0 = (\mu-4\gamma)\psi_0 + 2(\beta+2\tau)\psi_1-3\sigma\psi_2-3S_1 \phi_0\bar{\phi}_1+3S_0\phi_0\bar{\phi}_2\nonumber\\
+\Omega\,[-2\beta\phi_0\bar{\phi}_1+2\sigma\phi_1\bar{\phi}_1+2\eps\phi_0\bar{\phi}_2-2\kappa\phi_1\bar{\phi}_2-\bar{\phi}_2D\phi_0+\bar{\phi}_1\delta\phi_0],\label{BBB5}\\
\fl  \delta\psi_2-\Delta\psi_1 =-\nu\psi_0+2(\mu-\gamma)\psi_1+3\tau\psi_2-2\sigma\psi_3-S_2 \phi_0\bar{\phi}_1-2 S_1 \phi_1\bar{\phi}_1+2 S_0 \phi_1\bar{\phi}_1+\bar{S}_1 \phi_0\bar{\phi}_2\nonumber\\
+\frac{2}{3}\,\Omega\,[-(\gamma+\mu)\phi_0\bar{\phi}_1+\tau\phi_1\bar{\phi}_1+\sigma\phi_2\bar{\phi}_1+(\pi+\alpha)\phi_0\bar{\phi}_2-\rho\phi_1\bar{\phi}_2-\kappa\phi_2\bar{\phi}_2\nonumber\\
+\bar{\phi}_1\delta\phi_1-\bar{\phi}_2 D\phi_1]+\frac{1}{3}\,\Omega\,\left[\bar{\phi}_1\Delta\phi_0-\bar{\phi}_2\bar{\delta}\phi_0\right],\label{BBB6}\\
\fl \delta\psi_3-\Delta\psi_2 = - 2\nu \psi_1+3\mu\psi_2+2(\tau-\beta)\psi_3 -\sigma \psi_4- 2 S_2 \phi_1\bar{\phi}_1 - S_1\phi_2\bar{\phi}_1+S_0\phi_2\bar{\phi}_2+2\bar{S}_1 \phi_1\bar{\phi}_2\nonumber\\
+\frac{2}{3}\,\Omega[-\nu\phi_0\bar{\phi}_1-\mu\phi_1\bar{\phi}_1+(\beta+\tau)\phi_2\bar{\phi}_1+\lambda\phi_0\bar{\phi}_2+\pi\phi_1\bar{\phi}_2-(\eps+\rho)\phi_2\bar{\phi}_2+\bar{\phi}_1\Delta\phi_1-\bar{\phi}_2\bar{\delta}\phi_1]\nonumber\\
+\frac{1}{3}\,\Omega\left[ \bar{\phi}_1\delta\phi_2-\bar{\phi}_2D\phi_2\right],\label{BBB7}\\
\fl  \delta\psi_4-\Delta\psi_3 = - 3\nu\psi_2 + 2 (2\gamma+2\mu)\psi_3 + (\tau-4\beta)\psi_4-3 S_2 \phi_2\bar{\phi}_1+3 \bar{S}_1 \phi_1\bar{\phi}_2\nonumber\\
+\Omega\left[ - 2 \nu \phi_1\bar{\phi}_1 + 2\gamma\phi_2\bar{\phi}_1+2\lambda\phi_1\bar{\phi}_2-2\alpha\phi_2\bar{\phi}_2+\bar{\phi}_1\Delta\phi_2-\bar{\phi}_2\bar{\delta}\phi_2\right].\label{BBB8}
\end{eqnarray}
\ENDNUMPARTS{B}

\section*{Appendix C: Reissner-Nordstr\"om space-time}
\setcounter{equation}{0}
\renewcommand{\theequation}{C\arabic{equation}}
To justify  our choice of gauge and show that the
choice made by \cite{GS} is too restrictive we shall show here how a
simple space-time, namely, the Reissner-Nordstr\"om solution,
appears in our gauge. The physical metric is
\begin{eqnarray}
\fl \d\tilde{s}^2 = \left(1\,-\,\frac{2 m}{\tilde{r}}\,+\,\frac{Q^2}{\tilde{r}^2}\right)\,\d t^2 - \left(1\,-\,\frac{2 m}{\tilde{r}}\,+\,\frac{Q^2}{\tilde{r}^2}\right)^{-1}\,\d \tilde{r}^2 - \tilde{r}^2\,\d \Sigma^2,
\end{eqnarray}
\noindent where $Q$ is the charge and $\d\Sigma^2=\d
\theta^2+\sin^2\theta \d \phi^2$. In the standard conformal
compactification of the Reissner-Nordstr\"om space-time one
introduces the ``tortoise coordinate" $r^*$ and the advanced time
$v$ by \nopagebreak
\begin{eqnarray}
\eqalign{
\d\tilde{r} = \left(1\;-\;\frac{2m}{\tilde{r}}\;+\;\frac{Q^2}{\tilde{r}^2}\right)\,\d r^*,\\
v=t\;+\;r^*.}
\end{eqnarray}
\noindent In these coordinates the physical metric acquires the form
\begin{eqnarray}
\d \tilde{s}^2 = \left(1 - \frac{2m}{\tilde{r}} + \frac{Q^2}{\tilde{r}^2}\right)\,\left(\d v^2 - 2\,\d v\,\d r^*\right) - \tilde{r}^2\,\d\Sigma^2.
\end{eqnarray}
\noindent We compactify it by defining the coordinate
\begin{eqnarray}
r = \tilde{r}^{-1}
\end{eqnarray}
\noindent and the conformal factor
\begin{eqnarray}
\Omega = r.
\end{eqnarray}
\noindent The unphysical metric then reads
\begin{eqnarray}\label{rn5}
\d s^2 = r^2\,\left(1-2\,m\,r+Q^2\,r^2\right)\,\d v^2\;+\;2\,\d
v\,\d r \;-\;\d\Sigma^2.
\end{eqnarray}

\noindent Comparing this with (\ref{MetricTensor})-(\ref{CovariantMetric}) we find the metric functions to be
\begin{eqnarray}
\eqalign{
H &= \frac{1}{2}\,r^2\;-\;m\,r^3\;+\;\frac{1}{2}\,Q^2\,r^4,\\
C^I &= 0,\\
P^2 &= \frac{1}{\sqrt{2}},\\
P^3 &= \frac{i}{\sqrt{2}\,\sin\theta}.}
\end{eqnarray}
\noindent From the metric the other geometrical quantities follow. The spin coefficients are all zero, except for
\begin{eqnarray}
\eqalign{
\eps &= -\,\frac{1}{2}\,r + \frac{3}{2}\,m\,r^2 - Q^2\,r^3,\\
\alpha &= -\,\beta = -\,\frac{1}{2\,\sqrt{2}}\,\cot\theta.}
\end{eqnarray}

\noindent The non-zero components of the Weyl and Ricci tensor read
\begin{eqnarray}
\eqalign{
\psi_2 &= m - Q^2\,r,\\
\Phi_{11} &= \frac{1}{2} - \frac{3}{2}\,m\,r + \frac{3}{2}\,Q^2\,r^2,\\
\Lambda &= \frac{1}{2}\,m\,r - \frac{1}{2}\,Q^2\,r^2.}
\end{eqnarray}
\noindent The electromagnetic 4-potential and corresponding electromagnetic tensor in these coordinates are
\begin{eqnarray}
\eqalign{A_\mu &= \left( Q\,r,0,0,0\right),\\
F_{\mu\,\nu} &= -\,Q\,\epsilon_{\mu\,\nu\,2\,3}.}
\end{eqnarray}
\noindent The only non-vanishing NP component of $F_{\mu\nu}$ is
\begin{eqnarray}
\phi_1 &= Q,
\end{eqnarray}
\noindent as one would expect. All these results are in accordance
with results obtained in the text.
On the other hand, the gauge condition $\Lambda=0$ everywhere,
imposed in \cite{GS}, leads to a periodic unphysical metric only if
$m=0$, i.e. flat space-time. This can be seen as follows: we need to
rescale the metric (\ref{rn5}) say to
\[\hat{g}_{ab}=\Theta^{-2}g_{ab}\]
so that, by (\ref{ConfTrans}),
\[\hat\Lambda=\Theta^{-2}(\Theta\Lambda+\frac14\Box\Theta)=0,\]
where the boundary conditions on $\Theta$ are that $\Theta=1$ on
$r=0$ and, say, $v=0$ (in order to preserve the conditions that
$\rho=0$ on $r=0$, $\mu=0$ on $v=0$ and $\Theta=1$ on $v=r=0$). With
the metric (\ref{rn5}) this wave equation on $\Theta$ becomes
\begin{eqnarray}\label{rn6}2\partial_v\partial_r\Theta&=\partial_r(A\partial_r\Theta)-L^2\Theta-2(mr-Qr^2)\Theta,\end{eqnarray}
with $A=r^2(1-2mr+Q^2r^2)$ and
\[L^2\Theta=\frac{1}{\sin\theta}\frac{\partial}{\partial\theta}
\left(\frac{\partial\Theta}{\partial\theta}\right)+\frac{1}{\sin^2\theta}\frac{\partial^2\Theta}{\partial\theta^2}\;.\]
Now from (\ref{rn6}) evaluated at $r=0$ we calculate
$\partial_v\partial_r\Theta=0$ so that $\partial_r\Theta$ is
constant on $\scri^-$, but it vanishes at $v=0$ so it is zero for
all $v$. Then from (\ref{rn6}) again at $\scri^-$,
\[\partial_v\partial^2_r\Theta=-m.\]
Thus $\Theta$ cannot be periodic in $v$ unless $m=0$, in which case
the physical metric is flat.
~\\
~\\

\end{document}